\newtheorem{theorem}{Theorem}[]
\newtheorem{lemma}[theorem]{Lemma}
\newtheorem{corollary}[theorem]{Corollary}
\newtheorem{remark}[theorem]{Remark}
\newtheorem{proposition}[theorem]{Proposition}
\newtheorem{conjecture}[theorem]{Conjecture}
\newtheorem{problem}[theorem]{Problem}
\colorlet{tableheadcolor}{gray!25} % Table header colour = 25% gray
\colorlet{tablerowcolor}{gray!20} % Table row separator colour = 10% gray
\newcommand{\rowcol}{\rowcolor{tablerowcolor}} %
\newcommand{\NP}{\textsc{NP}}
\newcommand{\ProblemFormat}[1]{{\sc #1}}
\newcommand{\ProblemName}[1]{\ProblemFormat{#1}\xspace}
\DeclareMathOperator{\rom}{\gamma_R}
\DeclareMathOperator{\romtwo}{\gamma_{\{R2\}}}
\DeclareMathOperator{\pid}{\gamma^p_I}
\DeclareMathOperator{\KC}{KC}
\DeclareMathOperator{\im}{\nu_s}
\DeclareMathOperator{\fdk}{fd_k}
\DeclareMathOperator{\fd2}{fd_2}
\newcommand{\probXC}{\ProblemName{Exact Cover by 3-Sets}}
\newcommand{\probPlanarXC}{\ProblemName{Planar Exact Cover by 3-Sets}}
\newcommand{\probRomanTwoD}{\ProblemName{Roman $\{2\}$-Domination}}
\newcommand{\probPID}{\ProblemName{Perfect Italian Domination}}
\newtoks\bsubfloattoks
\newdimen\bsubfloatht
\newcommand{\bsubfloat}[2][]{%
  \sbox\z@{#2}%
  \ifdim\bsubfloatht<\ht\z@
    \bsubfloatht=\ht\z@
  \fi
  \advance\bsubfloatcount\@ne
  \@namedef{bsubfloat\romannumeral\bsubfloatcount}{%
    \subfloat[#1]{\vbox to\bsubfloatht{\hbox{#2}\vfill}}}%
}
\newcommand{\resetbsubfloat}{\bsubfloatcount\z@\bsubfloatht=\z@}
\title{Perfect Italian domination on planar and regular graphs}
\author{Juho Lauri \and Christodoulos Mitillos\thanks{University of Cyprus, Cyprus}}
\begin{document}
\maketitle

\begin{abstract}
A perfect Italian dominating function of a graph $G=(V,E)$ is a function $f : V \to \{0,1,2\}$ such that for every vertex $f(v) = 0$, it holds that $\sum_{u \in N(v)} f(u) = 2$, i.e., the weight of the labels assigned by $f$ to the neighbors of $v$ is exactly 2.
The weight of a perfect Italian function is the sum of the weights of the vertices.
The perfect Italian domination number of $G$, denoted by $\pid(G)$, is the minimum weight of any perfect Italian dominating function of $G$.
While introducing the parameter, Haynes and Henning (Discrete Appl.\ Math.\ (2019), 164--177) also proposed the problem of determining the best possible constants $c_\mathcal{G}$ such that $\pid(G) \leq c_\mathcal{G} \times n$ for all graphs of order $n$ when $G$ is in a particular class $\mathcal{G}$ of graphs.
They proved that $c_\mathcal{G} = 1$ when $\mathcal{G}$ is the class of bipartite graphs, and raised the question for planar graphs and regular graphs.
We settle their question precisely for planar graphs by proving that $c_\mathcal{G} = 1$ and for cubic graphs by proving that $c_\mathcal{G} = 2/3$.
For split graphs, we also show that $c_\mathcal{G} = 1$.
In addition, we characterize the graphs $G$ with $\pid(G)$ equal to~2 and~3 and determine the exact value of the parameter for several simple structured graphs.
We conclude by proving that it is $\NP$-complete to decide whether a given bipartite planar graph admits a perfect Italian dominating function of weight~$k$.
\end{abstract}

\section{Introduction}

The motivation for the problem we study stems from the problem of deployment of military forces to guard several points of interest, modeled by an undirected graph.
Such problems from different historical eras were described by ReVelle and Rosing~\cite{Revelle2000} (see also Stewart~\cite{Stewart1999}). 
For instance, the authors describe a defense-in-depth strategy by Emperor Constantine (Constantine the Great, 274--337) where units were deployed such that any city without a unit was to be neighbored by a city harboring two units.
The idea was that if the city without a unit was attacked, the neighboring city could dispatch a unit to protect it without becoming vulnerable itself.
In this setting, the objective was to minimize the total number of units needed.
Albeit overly simplified particularly for the modern era to be of practical use, these type of domination problems on graphs have resulted in interesting graph-theoretical problems that have attracted significant interest from the research community.

Let $G=(V,E)$ be a simple undirected graph.
To reduce clutter, we can write an element $\{u,v\} \in E$ as $uv$.
The \emph{open neighborhood} of a vertex $v \in V$, denoted by $N(v)$, is the set of neighbors of $v$ excluding $v$ itself, i.e., $N(v) = \{ u \mid uv \in E \}$.
The \emph{degree} of a vertex $v$ is the number of edges incident to it, i.e., $|N(v)|$.
In particular, a vertex of degree one is a \emph{pendant} and the vertex adjacent to a pendant vertex is a \emph{support}.
For the following discussion, let $f : V \to \{0,1,2\}$ be a vertex-labeling of~$G$.

We say that $f$ is a \emph{perfect Italian dominating function} on $G$, abbreviated a PID-function, when it holds that whenever $f(v) = 0$ for any $v \in V$, it holds that $\sum_{u \in N(v)} f(u) = 2$, i.e., the accumulated weight assigned to the neighbors of $v$ by $f$ is exactly 2.
The \emph{weight} of $f$ is the sum of its labels, i.e., $\sum_{v \in V} f(v)$.
The \emph{perfect Italian domination number} of $G$, denoted by $\pid(G)$, is the minimum weight of a PID-function on $G$.
This concept was introduced by Haynes and Henning~\cite{Haynes2019} as a natural variant of similar, previously rather heavily-studied, parameters of so-called Roman domination introduced by Cockayne~{et al.}~\cite{Cockayne2004}. We refer the interested reader to e.g.,~\cite[Section~3.9]{Gera2018} for a brief overview of some of these variants, but describe some relevant to our work in the following.

We say that $f$ is a \emph{Roman dominating function}, abbreviated an RDF-function, on $G$ if every vertex $v \in V$ for which $f(v) = 0$ is adjacent to at least one vertex $u$ for which $f(u) = 2$.
The \emph{Roman domination number} of $G$, denoted by $\rom(G)$, is the minimum weight of an RDF-function on~$G$.
While introducing the concept, Cockayne~{et al.}~\cite{Cockayne2004} also gave several bounds for $\rom(G)$ and determined its value for certain structured graph classes including paths, cycles and complete multipartite graphs.
For example, the authors proved that $\gamma(G) \leq \rom(G) \leq 2 \gamma(G)$ and that $\gamma(G) = \rom(G)$ implies $G$ to be edgeless, where $\gamma(G)$ is the \emph{domination number} of~$G$.
Further, they mentioned that it has been proved that deciding whether a graph $G$ admits an RDF-function of weight at most~$k$ is $\NP$-complete.
For further combinatorial results on $\rom(G)$, see the survey~\cite[Section~5.7]{Chellali2012}.
A possible application in network design is described by Chambers~{et al.}~\cite{Chambers2009}, while Liedloff~{et al.}~\cite{Liedloff2008} give algorithms for several structured graph classes.

Another variant of perfect Italian domination, introduced by Chellali~{et al.}~\cite{Chellali2016}, is obtained by relaxing the constraint so that for every $v \in V$, if $f(v) = 0$, then $\sum_{u \in N(v)} f(u) \geq 2$, i.e., the accumulated weight of $f$ assigned to the neighbors of $v$ is \emph{at least} two.
Such an $f$ is known as a \emph{Roman $\{2\}$-dominating function of $G$}, also referred to as an Italian dominating function by Henning and Klostermeyer~\cite{Henning2017}.
Here, the \emph{Roman $\{2\}$-domination number} of $G$, denoted by $\romtwo(G)$, is the minimum weight of a Roman $\{2\}$-dominating function on $G$.
In addition to various combinatorial results, Chellali~{et al.}~\cite{Chellali2016} also proved that deciding whether a graph $G$ admits a Roman $\{2\}$-dominating function of weight at most~$k$ is $\NP$-complete even when $G$ is bipartite.

\paragraph{Our results}
We continue the study of perfect Italian domination initiated by Haynes and Henning~\cite{Haynes2019} by giving the following results.
\begin{itemize}
\item In Section~\ref{sec:basic}, we relate the perfect Italian domination number to other well-known Roman domination numbers. 
Further, we characterize the graphs $G$ such that $\pid(G) = 2$ which includes connected threshold graphs, paths, cycles, and wheels.
We proceed to give a characterization of graphs $G$ such that $\pid(G) = 3$, and then conclude by determining the exact value of the parameter for complete multipartite graphs.

\item In Section~\ref{sec:upper-bounds}, we consider the question of Haynes and Henning~\cite{Haynes2019} for finding best possible upper bounds on $\pid(G)$ as a function of the order $n$ when $G$ is planar or regular.
For planar graphs, split graphs, and $k$-regular graphs for $k \geq 5$, we prove that there is an infinite family of such graphs $G$ such that $\pid(G) = n$, meaning that no upper bound of the form $c \cdot n$ exists, for any $c < 1$.
For cubic graphs, we prove that $\tfrac{2}{5}n \leq \pid(G) \leq \tfrac{2}{3}n$, and demonstrate that these bounds are tight.

\item In Section~\ref{sec:hardness}, we turn to complexity-theoretic questions.
Specifically, we prove that deciding whether a given graph $G$ admits a PID-function of weight at most~$k$ is $\NP$-complete, even when $G$ is restricted to the class of bipartite planar graphs.
We also strengthen the result of Chellali~{et al.}~\cite{Chellali2016} by showing that deciding whether $G$ admits a Roman $\{2\}$-dominating function of weight at most~$k$ is $\NP$-complete, even when $G$ is both bipartite and planar.
\end{itemize}
We conclude in Section~\ref{sec:problems} by giving some further open problems and conjectures arising from our work.

\section{Basic bounds, properties and characterizations}
\label{sec:basic}
In this section, we determine some basic properties of the perfect Italian domination number of a graph.

\subsection{Graphs with perfect Italian domination number two}
We begin with the following known bounds.
\begin{theorem}[Chellali~{et al.}~\cite{Chellali2016}]
\label{thm:bounds}
For every graph $G$, it holds that $\gamma(G) \leq \romtwo(G) \leq \rom(G)$.
\end{theorem}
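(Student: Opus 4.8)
The plan is to prove the two inequalities independently, in each case exploiting that the defining constraint of one parameter is a relaxation of the constraint of its neighbor.

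First I would handle the right-hand inequality $\romtwo(G) \leq \rom(G)$ by observing that every RDF-function is automatically a Roman $\{2\}$-dominating function. Concretely, if $f$ is an RDF-function and $v$ is any vertex with $f(v) = 0$, then by definition $v$ has a neighbor $u$ with $f(u) = 2$, and hence $\sum_{u \in N(v)} f(u) \geq 2$, which is exactly the Roman $\{2\}$ requirement at $v$. Thus a minimum-weight RDF-function is a feasible Roman $\{2\}$-dominating function of weight $\rom(G)$, and the minimality defining $\romtwo(G)$ yields $\romtwo(G) \leq \rom(G)$.

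For the left-hand inequality $\gamma(G) \leq \romtwo(G)$, I would take a minimum-weight Roman $\{2\}$-dominating function $f$ and form the vertex set $D = \{ v \in V \mid f(v) \geq 1 \}$. The key step is to check that $D$ is a dominating set: any $v \in D$ dominates itself, while any $v$ with $f(v) = 0$ satisfies $\sum_{u \in N(v)} f(u) \geq 2 > 0$ and therefore has at least one neighbor carrying a positive label, i.e., a neighbor in $D$. Since each vertex of $D$ contributes at least $1$ to the weight of $f$, we obtain $|D| \leq \sum_{v \in V} f(v) = \romtwo(G)$, and combining this with $\gamma(G) \leq |D|$ finishes the direction.

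I do not anticipate a genuine obstacle here: both bounds follow directly from unpacking the definitions, amounting to a containment-of-feasible-solutions argument in one direction and a projection-to-a-dominating-set argument in the other. The only points needing a moment of care are verifying that the set $D$ extracted from $f$ really is dominating and that the weight of $f$ bounds its cardinality, both of which are immediate from $f(v) \geq 1$ on $D$.
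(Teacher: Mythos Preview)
Your proof is correct; both inequalities follow exactly as you argue, by containment of feasible solutions for $\romtwo(G) \leq \rom(G)$ and by projecting a minimum Roman $\{2\}$-dominating function onto its support for $\gamma(G) \leq \romtwo(G)$. Note, however, that the paper does not give its own proof of this statement: it is quoted as a known result of Chellali~et~al.\ and left unproved, so there is no in-paper argument to compare against.
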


\begin{proposition}
\label{prop:pid-bounds}
For every graph $G$, it holds that $\gamma(G) \leq \romtwo(G) \leq \pid(G)$.
\end{proposition}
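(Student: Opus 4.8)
The plan is to prove the two inequalities $\gamma(G) \leq \romtwo(G)$ and $\romtwo(G) \leq \pid(G)$ separately. The first inequality is already given to us by Theorem~\ref{thm:bounds}, so the entire content of the proposition reduces to establishing the single new inequality $\romtwo(G) \leq \pid(G)$. The natural strategy is to show that every PID-function is in particular a Roman $\{2\}$-dominating function, so that the minimization defining $\romtwo(G)$ ranges over a superset of the functions available for $\pid(G)$, forcing $\romtwo(G) \leq \pid(G)$.

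Concretely, let $f : V \to \{0,1,2\}$ be a PID-function on $G$, and I would verify that $f$ satisfies the (weaker) Roman $\{2\}$-domination condition. Recall that $f$ being a PID-function means that for every vertex $v$ with $f(v) = 0$ we have $\sum_{u \in N(v)} f(u) = 2$, whereas being a Roman $\{2\}$-dominating function only requires $\sum_{u \in N(v)} f(u) \geq 2$ for every such $v$. Since the equality $\sum_{u \in N(v)} f(u) = 2$ trivially implies the inequality $\sum_{u \in N(v)} f(u) \geq 2$, the function $f$ is automatically a Roman $\{2\}$-dominating function. The weight $\sum_{v \in V} f(v)$ is computed identically under both notions, so any PID-function of weight $w$ is a Roman $\{2\}$-dominating function of the same weight $w$.

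Taking $f$ to be a PID-function of minimum weight, i.e.\ of weight exactly $\pid(G)$, the observation above exhibits a Roman $\{2\}$-dominating function of weight $\pid(G)$. Hence the minimum weight over all Roman $\{2\}$-dominating functions, which is $\romtwo(G)$ by definition, satisfies $\romtwo(G) \leq \pid(G)$. Combining this with the inequality $\gamma(G) \leq \romtwo(G)$ from Theorem~\ref{thm:bounds} yields the chain $\gamma(G) \leq \romtwo(G) \leq \pid(G)$, completing the proof.

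I do not anticipate any genuine obstacle here: the proof is essentially a definition-chase observing that the PID-constraint is a strengthening of the Roman $\{2\}$-domination constraint. The only point requiring a moment of care is to state clearly that the two functions have the same weight and that we are comparing minima over nested feasible sets; once that is made explicit, the inequality follows immediately. The mild subtlety worth a sentence is to confirm that no feasibility issue arises (every graph admits at least one PID-function, for instance the all-ones labeling, so both parameters are well defined and finite).
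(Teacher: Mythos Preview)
Your proposal is correct and follows exactly the paper's approach: the paper's proof is the single sentence ``Every PID-function of $G$ is a Roman $\{2\}$-dominating function of $G$, so the bound follows,'' which is precisely your argument in compressed form. Your additional remarks on nested feasible sets and well-definedness are fine but more detailed than what the paper provides.
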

\begin{proof}
Every PID-function of $G$ is a Roman $\{2\}$-dominating function of $G$, so the bound follows.
\end{proof}
\noindent Since the components of a graph do not interact with each other in terms of domination, the optimal PID-function of a graph $G$ consists of optimal PID-functions of its components, as made precise in the following.
\begin{proposition}
\label{prop:disconnected}
If $G$ is a disconnected graph with components $G_1,G_2,\ldots,G_r$, then $\pid(G) = \sum_{i=1}^{r} \pid(G_i)$.
\end{proposition}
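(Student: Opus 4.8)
The plan is to prove the two inequalities $\pid(G) \leq \sum_{i=1}^{r} \pid(G_i)$ and $\pid(G) \geq \sum_{i=1}^{r} \pid(G_i)$ separately, both relying on the same structural observation: since $G_1, \ldots, G_r$ are distinct components, there are no edges between them, so for any vertex $v \in V(G_i)$ its open neighborhood $N_G(v)$ is entirely contained in $V(G_i)$ and equals $N_{G_i}(v)$. This means the PID-constraint at $v$ — that $\sum_{u \in N(v)} f(u) = 2$ whenever $f(v) = 0$ — involves only labels of vertices within the same component, so the constraints decouple across components.

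For the upper bound, I would take an optimal PID-function $f_i$ on each component $G_i$ and glue them into a single function $f$ on $G$ by setting $f(v) = f_i(v)$ for $v \in V(G_i)$. By the observation above, the constraint at any $v$ with $f(v) = 0$ reduces to the corresponding constraint for $f_i$ in $G_i$, which is satisfied; hence $f$ is a valid PID-function on $G$. Its weight is $\sum_{i=1}^{r} w(f_i) = \sum_{i=1}^{r} \pid(G_i)$, giving $\pid(G) \leq \sum_{i=1}^{r} \pid(G_i)$.

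For the lower bound, I would take an optimal PID-function $f$ on $G$ and restrict it to each component, defining $f_i = f|_{V(G_i)}$. Again by the decoupling observation, each $f_i$ satisfies the PID-constraints within $G_i$ and so is a valid PID-function on $G_i$, whence $w(f_i) \geq \pid(G_i)$. Since the vertex sets of the components partition $V(G)$, the total weight splits as $\pid(G) = w(f) = \sum_{i=1}^{r} w(f_i) \geq \sum_{i=1}^{r} \pid(G_i)$. Combining the two inequalities yields equality.

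There is no real obstacle here; the only point requiring any care is the explicit justification that a vertex's neighborhood lies wholly inside its own component, which licenses both the gluing and the restriction operations. Everything else is bookkeeping about weights summing over a partition of $V(G)$.
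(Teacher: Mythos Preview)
Your proof is correct and matches the paper's approach exactly: the paper states this proposition without a formal proof, preceding it only with the one-line remark that ``the components of a graph do not interact with each other in terms of domination,'' which is precisely your decoupling observation. Your two-inequality argument spells out what the paper leaves implicit.
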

\noindent It was shown by Chellali~{et al.}~\cite[Corollary~9]{Chellali2016} that for $n$-vertex paths $P_n$ and cycles $C_n$ there is an optimal Roman $\{2\}$-dominating function that uses only weights of 0 and 1. Such a function is also a PID-function since the graphs are of maximum degree two, meaning that any vertex of weight 0 has to have exactly two neighbors of weight 1. Furthermore, if in a given PID-function a vertex of weight 0 has a neighbor of weight 2, we end up with the pattern 2-0-0-2, which is no better than the above. Combining these two points, we arrive at the following.
\begin{proposition}
For every integer $n \geq 1$, it holds that $\pid(P_n) = \lceil (n+1)/2 \rceil$ and $\pid(C_n) = \lceil n/2 \rceil$.
\end{proposition}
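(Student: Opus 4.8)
The plan is to reduce the problem to counting labels in functions that use only the values $0$ and $1$. Combining Proposition~\ref{prop:pid-bounds} (which gives $\romtwo(G)\le\pid(G)$) with the cited result of Chellali et al.\ and the $2$-$0$-$0$-$2$ observation preceding the statement, both $P_n$ and $C_n$ admit an optimal PID-function that assigns only the labels $0$ and $1$. For any such function the weight is exactly the number of vertices labeled $1$, and since every graph here has maximum degree two the PID-condition becomes the purely combinatorial requirement that every $0$-vertex have both of its neighbors labeled $1$; equivalently, the set of $0$-vertices is independent, and in the path case the two degree-one endpoints are forced to be labeled $1$. Thus it suffices to determine, among such $\{0,1\}$-labelings, the minimum possible number of $1$'s.

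For the lower bounds I would argue by a gap count. On the cycle $C_n$, read the vertices labeled $1$ in cyclic order; between two cyclically consecutive such vertices the stretch of $0$-vertices has length at most $1$, since two adjacent $0$-vertices are forbidden. If there are $k$ ones, these $k$ stretches contain all $n-k$ zeros, so $n-k\le k$ and hence $k\ge \lceil n/2\rceil$. On the path $P_n$ the endpoints are labeled $1$, so writing the positions of the $1$'s as $1=p_1<\dots<p_k=n$ we have $p_{i+1}-p_i\le 2$ for each $i$, whence $n-1=p_k-p_1\le 2(k-1)$ and $k\ge\lceil(n+1)/2\rceil$. (The cases $n\in\{1,2\}$ for the path are immediate.) By the reduction above these give $\pid(C_n)\ge\lceil n/2\rceil$ and $\pid(P_n)\ge\lceil(n+1)/2\rceil$.

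For the matching upper bounds I would exhibit explicit $\{0,1\}$-PID-functions. For $C_n$ with $n$ even the strictly alternating pattern $1,0,1,0,\dots$ uses $n/2$ ones and is valid; for $n$ odd the same pattern with a single adjacent pair of $1$'s inserted uses $(n+1)/2$ ones, again matching $\lceil n/2\rceil$. For $P_n$, labeling every odd-indexed vertex $1$---and, when $n$ is even, additionally the last vertex---yields a valid function of weight $\lceil(n+1)/2\rceil$, since every even-indexed vertex then sits between two $1$'s. Combining with the lower bounds completes the proof.

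The only genuinely delicate point is the reduction in the first step: one must invoke Proposition~\ref{prop:pid-bounds}, the Chellali et al.\ result and the $2$-$0$-$0$-$2$ remark together to be certain that restricting to $\{0,1\}$-labelings loses nothing. After that the argument is elementary counting, where the one thing to watch is the parity bookkeeping---in particular the forced ``$+1$'' in the path bound coming from the degree-one endpoints, and the odd-cycle case---so that no off-by-one slips into the ceilings.
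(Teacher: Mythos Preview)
Your proof is correct and shares the paper's key reduction step---using Chellali et al.'s result together with Proposition~\ref{prop:pid-bounds} and the $2$-$0$-$0$-$2$ remark to restrict attention to $\{0,1\}$-valued PID-functions. The difference lies in how the actual values are obtained: the paper simply notes that the optimal $\{0,1\}$-valued Roman $\{2\}$-dominating function furnished by Chellali et al.\ is already a PID-function, so $\pid=\romtwo$ on paths and cycles and the numerical values $\lceil(n+1)/2\rceil$ and $\lceil n/2\rceil$ are imported directly from their Corollary~9. You instead rederive those numbers from scratch via the gap-counting lower bounds and the explicit alternating constructions. Your route is slightly more work but has the advantage of being self-contained rather than deferring the arithmetic to the literature.
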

\noindent The following observation characterizes the graphs $G$ with $\pid(G) = 2$.
Recall that the \emph{join} of graphs $G$ and $H$ is the graph union of $G$ and $H$ with all the edges between $V(G)$ and $V(H)$ added.
\begin{proposition}
\label{prop:dom}
A non-trivial connected graph $G$ has $\pid(G) = 2$ precisely when $G$ can be written as the join of $G_1$ and $G_2$, where $G_1$ is either $K_1$, $2K_1$ or $K_2$.
\end{proposition}
\begin{proof}
For $G$ to have $\pid(G) = 2$, there must exist a PID-function that labels (i) exactly one vertex 2 and the rest 0 or (ii) exactly two vertices 1 and the rest 0.
If exactly one vertex $v$ has label 2, all vertices distinct from $v$ must be adjacent to it, i.e., $G_1$ must be $K_1$.
Similarly, if there are two vertices $u$ and $v$ with label 1, $G_1$ must be either $2K_1$ or $K_2$ meaning that $u$ dominates at least $V(G) \setminus \{v\}$ and vice versa for $v$.
\end{proof}
\noindent Several structured graph classes fall under the above characterization, as we will see next.
\begin{proposition}
\label{prop:pid-threshold}
A non-trivial connected threshold graph $G$ has $\pid(G) = 2$.
\end{proposition}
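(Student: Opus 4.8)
The plan is to reduce the statement to the characterization of Proposition~\ref{prop:dom}. That proposition tells us that a non-trivial connected graph has perfect Italian domination number~$2$ exactly when it is the join of some $G_2$ with a graph $G_1 \in \{K_1, 2K_1, K_2\}$, so it suffices to exhibit such a decomposition for every connected threshold graph. The natural candidate is $G_1 = K_1$: I would show that every non-trivial connected threshold graph $G$ contains a \emph{universal} vertex $v$ (one adjacent to all others), whence $G$ is precisely the join of $K_1$ (namely $\{v\}$) with $G - v$, and Proposition~\ref{prop:dom} applies directly.

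The key structural fact is therefore that a connected threshold graph on at least two vertices has a universal vertex. Recall that a threshold graph is built from a single vertex by repeatedly adding either an isolated vertex or a dominating vertex (one joined to all vertices present at the time of its addition). I would take such a construction sequence for $G$ and examine the last vertex $v$ added. If $v$ were added as an isolated vertex, it would have degree~$0$ in the final graph; since $G$ has at least two vertices, this would contradict connectivity. Hence $v$ is a dominating vertex, and as it is added last it is adjacent to all other vertices, i.e., $v$ is universal.

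With $v$ in hand, $G$ is the join of $K_1$ and $G - v$, so the hypothesis of Proposition~\ref{prop:dom} holds with $G_1 = K_1$, giving $\pid(G) = 2$. I expect the only genuine content to be the observation that connectivity forces the final vertex of the construction to be dominating; everything else is an immediate appeal to Proposition~\ref{prop:dom}. As a consistency check, the very same argument shows that any connected graph possessing a universal vertex has $\pid = 2$, which matches the wheels and other join-structured classes mentioned in the introduction.
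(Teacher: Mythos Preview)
Your proposal is correct and matches the paper's proof essentially line for line: the paper also invokes the iterative (binary-string) construction of threshold graphs, observes that connectivity forces the last symbol to be a $1$ (i.e., the last vertex added is dominating), and then applies Proposition~\ref{prop:dom}. Your phrasing in terms of ``the last vertex added cannot be isolated'' is exactly the same argument.
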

\begin{proof}
Every threshold graph $G$ can be represented as a binary string $s(G)$, read from left to right, where 0 denotes the addition of an isolated vertex and 1 denotes the addition of a dominating vertex (for a proof, see~\cite[Theorem~1.2.4]{Mahadev1995}).
Because $G$ is connected, the last symbol of $s(G)$ is a 1.
As $G$ has a dominating vertex, the proof follows by Proposition~\ref{prop:dom}.
\end{proof}
\noindent The following results are now immediate, where $S_n$, $K_n$, and $W_n$ denote the star graph, complete graph, and wheel graph, respectively, on $n$ vertices.
\begin{proposition}
For every integer $n \geq 2$, it holds that $\pid(S_n) = 2$.
\end{proposition}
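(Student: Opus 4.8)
The plan is to apply the structural characterization of Proposition~\ref{prop:dom} directly. Recall that the star $S_n$ on $n$ vertices is $K_{1,n-1}$: it has a single central vertex adjacent to all of the remaining $n-1$ vertices, which form an independent set. For every $n \geq 2$ this central vertex is a dominating vertex, so $S_n$ is the join of $G_1 = K_1$ (the center) and $G_2 = (n-1)K_1$ (the leaves). Since $G_1 = K_1$ is one of the three admissible choices in Proposition~\ref{prop:dom}, and since $S_n$ is a non-trivial connected graph, that proposition yields $\pid(S_n) = 2$ immediately.

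An equally short alternative is to observe that every star is a threshold graph, represented by the binary string $0^{\,n-1}1$ (add $n-1$ isolated vertices, then one dominating vertex), and to invoke Proposition~\ref{prop:pid-threshold}. Either route reduces the claim to an already-established result, so there is no genuine obstacle here; the only point worth a sentence is the boundary case $n=2$, where $S_n$ degenerates to $K_2$. In that case the center-and-leaf distinction collapses, but $K_2$ is still the join of $K_1$ with $K_1$, so the characterization continues to apply and the conclusion $\pid(S_2) = 2$ holds. I would therefore present the proof as a one-line consequence of Proposition~\ref{prop:dom}, noting the degenerate case in passing for completeness.
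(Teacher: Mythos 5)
Your proof is correct and matches the paper's own argument: the paper likewise observes that $S_n$ has a dominating vertex (the center) and concludes directly from Proposition~\ref{prop:dom}. Your additional remarks on the threshold-graph route and the $n=2$ boundary case are fine but not needed.
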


\begin{proposition}
For every integer $n \geq 2$, it holds that $\pid(K_n) = 2$.
\end{proposition}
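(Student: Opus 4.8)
The plan is to derive this immediately from the characterization in Proposition~\ref{prop:dom}, so the entire work reduces to exhibiting the right join decomposition. First I would observe that for $n \geq 2$ the complete graph $K_n$ is connected and non-trivial, so Proposition~\ref{prop:dom} applies. Then I would note that $K_n$ is precisely the join of $K_1$ and $K_{n-1}$: adding all edges between a single vertex and a clique on the remaining $n-1$ vertices recreates the full edge set of $K_n$. Taking $G_1 = K_1$ in the statement of Proposition~\ref{prop:dom}, the hypothesis ``$G_1$ is $K_1$, $2K_1$, or $K_2$'' is satisfied, and the ``precisely when'' direction of that proposition yields $\pid(K_n) = 2$ at once.

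A completely equivalent route, which I would mention as an alternative, is to invoke Proposition~\ref{prop:pid-threshold}: the complete graph $K_n$ is a connected threshold graph, realized by the binary string consisting of a single $0$ followed by $n-1$ copies of $1$ (start with an isolated vertex and repeatedly add dominating vertices). Since $K_n$ is non-trivial and connected for $n \geq 2$, Proposition~\ref{prop:pid-threshold} gives $\pid(K_n) = 2$ directly.

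There is essentially no genuine obstacle here; the only points requiring care are routine sanity checks of the hypotheses. The one boundary case worth a word is $n = 2$: here $K_2$ is itself the join of $K_1$ and $K_1$ (equivalently, $G_1 = K_1$ and $G_2 = K_1$), so the decomposition still falls under Proposition~\ref{prop:dom}, and the labeling placing weight $2$ on one vertex and $0$ on the other confirms $\pid(K_2) = 2$. Thus the statement follows without any additional computation.
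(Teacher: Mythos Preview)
Your proposal is correct and matches the paper's approach: the paper states that the result is immediate because $K_n$ has a dominating vertex, which is exactly your application of Proposition~\ref{prop:dom} with $G_1 = K_1$. Your alternative via Proposition~\ref{prop:pid-threshold} is also valid and equally in the spirit of the paper's terse justification.
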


\begin{proposition}
For every integer $n \geq 4$, it holds that $\pid(W_n) = 2$.
\end{proposition}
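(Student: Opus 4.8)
The plan is to invoke Proposition~\ref{prop:dom}, which reduces the claim to exhibiting $W_n$ as a join of the required form. First I would recall that the wheel $W_n$ is, by definition, obtained from a cycle $C_{n-1}$ on the $n-1$ rim vertices by adding a single hub vertex adjacent to every rim vertex. The hypothesis $n \geq 4$ guarantees that the rim $C_{n-1}$ is a genuine cycle (on at least three vertices) and that $W_n$ is a non-trivial connected graph, so that Proposition~\ref{prop:dom} is applicable.

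The key observation is simply that the hub is a dominating vertex, so $W_n$ is precisely the join of $K_1$ (the hub) and $C_{n-1}$ (the rim). Taking $G_1 = K_1$ and $G_2 = C_{n-1}$ in Proposition~\ref{prop:dom} then immediately yields $\pid(W_n) = 2$. As a sanity check on the upper bound, one can observe directly that labelling the hub~$2$ and every rim vertex~$0$ is a valid PID-function: each rim vertex has weight exactly~$2$ accumulated in its neighborhood, contributed by the hub.

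I do not expect any real obstacle here, since the statement is essentially a corollary of the characterization already established. The only point requiring a moment of care is the boundary condition $n \geq 4$, which ensures both that the graph is non-trivial and connected and that the join decomposition $W_n = K_1 \vee C_{n-1}$ has the shape demanded by Proposition~\ref{prop:dom} (with $G_1 = K_1$).
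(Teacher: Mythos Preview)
Your proposal is correct and takes essentially the same approach as the paper: the paper simply notes that $W_n$ has a dominating vertex, and hence the result follows from Proposition~\ref{prop:dom}. Your write-up is slightly more detailed (making the join decomposition $W_n = K_1 \vee C_{n-1}$ explicit and checking the boundary case), but the underlying argument is identical.
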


All such graphs have a dominating vertex, so the result follows. We close with one additional consequence of Proposition~\ref{prop:dom}.

\begin{proposition}
For every integer $n \geq 1$, it holds that $\pid(K_{2,n}) = 2$. 
\end{proposition}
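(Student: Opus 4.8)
The plan is to recognize $K_{2,n}$ as a join of exactly the shape demanded by Proposition~\ref{prop:dom} and then read off the value of the parameter directly, since the statement is explicitly flagged as a consequence of that characterization. Denote the two sides of the bipartition by $A$ and $B$ with $|A| = 2$ and $|B| = n$. Because $K_{2,n}$ is bipartite, no two vertices inside $A$ are adjacent and no two vertices inside $B$ are adjacent, so $A$ induces a copy of $2K_1$ and $B$ induces a copy of $nK_1$. Completeness of the bipartite graph means every vertex of $A$ is joined to every vertex of $B$, i.e.\ all edges between the two parts are present. Hence $K_{2,n}$ is precisely the join of $2K_1$ and $nK_1$.

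With this decomposition in hand, I would set $G_1 = 2K_1$ and $G_2 = nK_1$ and simply verify the hypotheses of Proposition~\ref{prop:dom}. For every $n \geq 1$ the graph $K_{2,n}$ has $n + 2 \geq 3$ vertices and is connected, since the two vertices of $A$ are linked through any vertex of $B$; thus it is non-trivial and connected. As $G_1 = 2K_1$ is one of the three graphs ($K_1$, $2K_1$, $K_2$) permitted in the statement, Proposition~\ref{prop:dom} yields $\pid(K_{2,n}) = 2$ at once.

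There is no genuine obstacle here; the only points requiring a moment's care are (i) correctly identifying which part plays the role of $G_1$ — it must be the side of size $2$, giving $2K_1$, rather than the $n$-vertex side — and (ii) confirming the degenerate case $n = 1$, where $K_{2,1} \cong P_3$ is still non-trivial and connected, so that the hypothesis of Proposition~\ref{prop:dom} genuinely applies for every $n \geq 1$. As a sanity check one can also exhibit the optimal function explicitly: assigning label $1$ to each of the two vertices of $A$ and $0$ to every vertex of $B$ is a PID-function of weight $2$, since each weight-$0$ vertex of $B$ sees exactly the two weight-$1$ vertices of $A$, for a neighbor-weight sum of $2$. Matching this against the elementary lower bound that no graph on at least two vertices admits a PID-function of weight below $2$ (a function of total weight at most $1$ necessarily leaves some weight-$0$ vertex with neighbor-weight sum strictly less than $2$) reproves $\pid(K_{2,n}) = 2$ without appealing to the join characterization.
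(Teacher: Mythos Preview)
Your proposal is correct and follows exactly the paper's approach: write $K_{2,n}$ as the join of $2K_1$ and the edgeless $n$-vertex graph $\overline{K}_n$ and invoke Proposition~\ref{prop:dom}. The additional checks (connectivity, the $n=1$ case, the explicit PID-function) are sound but not strictly needed beyond what the paper records.
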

\begin{proof}
The graph $K_{2,n}$ can be written as the join of $2K_1$ and $\overline{K}_n$ (i.e., the edgeless $n$-vertex graph) yielding this result.
\end{proof}

\subsection{Bounds via fair domination}
In this subsection, we give a characterization of graphs $G$ with $\pid(G) = 3$.
In order to do so, let us first introduce some concepts from domination.

Let $G=(V,E)$ be a graph.
For $k \geq 1$, a \emph{$k$-fair dominating set} of $G$ is a dominating set $D$ such that $|N(v) \cap D| = k$ for every $v \in V \setminus D$. 
That is, every vertex not in $D$ has precisely $k$ neighbors in $D$.
The \emph{$k$-fair domination number} of $G$, denoted by $\fdk(G)$, is the minimum cardinality of a $k$-fair dominating set in $G$.
This concept was introduced by Caro~{et al.}~\cite{Caro2012} (see also~\cite{Hansberg2013}).
It is also captured by the concept of $[j,k]$-domination as introduced by Chellali~{et al.}~\cite{Chellali2013}.
Here, a subset $S \subseteq V$ is a $[j,k]$-set if for every vertex $v \in V \setminus S$ it holds that $j \leq |N(v) \cap S| \leq k$, that is, every vertex not in $S$ has at least $j$ but no more than $k$ neighbors in $S$.
Clearly, a $k$-fair dominating set is equivalent to a $[k,k]$-dominating set.
Finally, such a set is also known as a \emph{perfect $k$-dominating set} (see e.g.,~\cite{Chaluvaraju2010,Chaluvaraju2014}).

\begin{theorem}
For every graph $G$, it holds that $\pid(G) \leq \fd2(G)$.
\end{theorem}
\begin{proof}
Let $D$ be a 2-fair dominating set.
Construct a vertex-labeling $f$ such that $f(v) = 1$ for $v \in D$ and $f(u) = 0$ for $u \notin D$.
By definition, for every $u$ with $f(u) = 0$ there are precisely two vertices $v_1$ and $v_2$ with $f(v_1) = f(v_2) = 1$ in $N(u)$, so $f$ is a PID-function.
The weight of $f$ is $|D|$ which can be as small as $\fd2(G)$, completing the proof.
\end{proof}
\noindent In order to exploit the previous theorem, we prove the following result regarding the structure of any PID-function $f$ witnessing $\pid(G) = 3$.
\begin{lemma}
\label{lem:pid3-3ones}
Any PID-function $f$ of a connected graph $G$ witnessing $\pid(G) = 3$ assigns a weight of 1 to exactly three vertices and does not assign a weight of 2 to any vertex.
\end{lemma}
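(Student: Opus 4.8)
The plan is to enumerate the two ways an optimal PID-function $f$ of weight $3$ can distribute its labels and to rule out the one that uses a label of $2$. Since the labels are nonnegative integers summing to $3$, either (i) exactly three vertices receive label $1$ and all others receive $0$, or (ii) one vertex $v$ receives label $2$, one vertex $u$ receives label $1$, and all others receive $0$. Case (i) is precisely the desired conclusion, so it suffices to show that case (ii) cannot witness $\pid(G) = 3$.

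Assuming (ii), the first step is to analyze the PID-constraint at every zero-labeled vertex $w$. The constraint requires $\sum_{x \in N(w)} f(x) = 2$, and the only positively-labeled vertices available are $v$ (label $2$) and $u$ (label $1$). The only way for this sum to equal exactly $2$ is for $w$ to be adjacent to $v$ and \emph{not} adjacent to $u$; being adjacent to both gives $3$, to $u$ alone gives $1$, and to neither gives $0$. Hence every vertex other than $u$ and $v$ is adjacent to $v$ and non-adjacent to $u$.

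The second step pins down the neighborhood of $u$. Since no zero-labeled vertex is adjacent to $u$, the only possible neighbor of $u$ is $v$; as $G$ is connected and nontrivial, $u$ must have a neighbor, forcing $N(u) = \{v\}$. Combined with the previous step, this shows that $v$ is adjacent to every other vertex of $G$, so $v$ is a dominating vertex. The final step is to invoke Proposition~\ref{prop:dom}: a dominating vertex exhibits $G$ as the join of $K_1$ (namely $\{v\}$) and $G-v$, so $\pid(G) = 2$, contradicting $\pid(G) = 3$. This eliminates case (ii) and leaves only case (i).

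I expect the only delicate point to be the neighborhood-sum case analysis, and in particular keeping in mind that the PID-constraint is imposed only on zero-labeled vertices, so $u$ itself carries no constraint and its neighbors are restricted purely by the constraints at the zero-labeled vertices. Once that bookkeeping is done, the reduction to a dominating vertex and the appeal to Proposition~\ref{prop:dom} are immediate, and the argument applies uniformly even in the degenerate subcase where no zero-labeled vertex exists.
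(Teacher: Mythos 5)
Your proof is correct, and the opening enumeration (weight $3$ means either three $1$'s or one $2$ plus one $1$) matches the paper's. Where you diverge is in how the $2{+}1$ case is killed. The paper's proof takes a neighbor $v'$ of the weight-$1$ vertex other than the weight-$2$ vertex, notes $f(v')=0$, and observes that $v'$ then sees neighborhood weight $1$ or $3$ --- an immediate local contradiction. You instead work through the constraint at \emph{every} zero-labeled vertex, deduce that the weight-$2$ vertex must be adjacent to all other vertices, and contradict $\pid(G)=3$ by invoking the characterization of graphs with $\pid(G)=2$ from Proposition~\ref{prop:dom}. Your route is a little longer but is in fact more careful: the paper's argument tacitly assumes that the weight-$1$ vertex has a neighbor besides the weight-$2$ vertex, and says nothing about the degenerate configurations where it does not (e.g.\ the weight-$1$ vertex is a pendant attached to the weight-$2$ vertex, or $G$ has only those two vertices); your domination argument absorbs these cases uniformly, since in each of them the weight-$2$ vertex dominates and Proposition~\ref{prop:dom} forces $\pid(G)=2$. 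The paper's approach buys brevity and locality; yours buys completeness and reuses machinery already established in the same section.
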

\begin{proof}
Suppose this was not the case, i.e., that instead $f$ set $f(u) = 2$ and $f(v) = 1$ for some distinct $u, v \in V(G)$.
Now consider any $v' \in N(v)$ such that $v' \neq u$.
Because $f$ is a PID-function of weight 3, it must hold that $f(v') = 0$. 
But because $v'$ is adjacent to $v$ and $f(v) = 1$, the weights on the neighbors of $v'$ assigned by $f$ cannot sum to exactly 2, contradicting the fact that $f$ is a PID-function.
\end{proof}
\noindent We are now ready to prove the main result of the section.
\begin{theorem}
A connected graph $G$ with $\pid(G) > 2$ has $\pid(G) = 3$ if and only if $G$ has a 2-fair dominating set $D$ of size~3.
\end{theorem}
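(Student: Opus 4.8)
The plan is to establish the two implications separately, each reducing almost immediately to a result already in hand. For the harder-sounding ``only if'' direction I would invoke Lemma~\ref{lem:pid3-3ones} to pin down the shape of an optimal PID-function, and for the ``if'' direction I would combine the standing hypothesis $\pid(G) > 2$ with the bound $\pid(G) \leq \fd2(G)$ proved earlier.

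For the ``if'' direction, I would start from a 2-fair dominating set $D$ with $|D| = 3$. Since $\fd2(G)$ is by definition the minimum size of such a set, this gives $\fd2(G) \leq 3$, and then the inequality $\pid(G) \leq \fd2(G)$ yields $\pid(G) \leq 3$. Together with the assumption $\pid(G) > 2$, this forces $\pid(G) = 3$, as desired.

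For the ``only if'' direction, I would take a PID-function $f$ witnessing $\pid(G) = 3$ and apply Lemma~\ref{lem:pid3-3ones}, which tells me that $f$ assigns the label~$1$ to exactly three vertices---call this set $D$---and the label~$0$ to all others, with no vertex receiving label~$2$. The goal is then to verify that this $D$ is a 2-fair dominating set. For any $u \notin D$ we have $f(u) = 0$, so the defining condition of a PID-function gives $\sum_{w \in N(u)} f(w) = 2$; since $f$ is supported on $D$ and takes the value~$1$ there, this sum simply counts the neighbors of $u$ in $D$, so $|N(u) \cap D| = 2$. In particular every such $u$ has a neighbor in $D$, so $D$ dominates $G$, and the count being exactly~$2$ is precisely the 2-fair condition. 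Hence $D$ is a 2-fair dominating set of size~$3$.

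The real content of the argument lies in Lemma~\ref{lem:pid3-3ones}, which already rules out any optimal PID-function that uses a label~$2$ or that spreads the weight~$3$ over fewer than three vertices; once that structural fact is available, I expect no genuine obstacle, since both directions become short bookkeeping built on the bound $\pid(G) \leq \fd2(G)$.
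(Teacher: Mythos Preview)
Your argument is correct and matches the paper's proof essentially line for line: both directions hinge on Lemma~\ref{lem:pid3-3ones} for ``only if'' and on turning a 2-fair dominating set of size~3 into a weight-3 PID-function for ``if.'' The only cosmetic difference is that you cite the bound $\pid(G)\le\fd2(G)$ for the ``if'' direction, whereas the paper re-describes the same labeling explicitly.
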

\begin{proof}
Suppose that $\pid(G) = 3$ which is witnessed by a PID-function $f$.
By Lemma~\ref{lem:pid3-3ones}, $f$ has picked three vertices, say $a$, $b$, and $c$ such that $f(a) = f(b) = f(c) = 1$ and labeled every other vertex~0.
We claim that $D = \{a,b,c\}$ is a 2-fair dominating set of size~3.
Indeed, every vertex with label~0 must be adjacent to exactly two vertices of $D$ since $f$ is a PID-function, so the claim follows.

For the other direction, construct a PID-function $f$ from a 2-fair dominating set $D$ such that $f(v) = 1$ for $v \in D$ and $f(u) = 0$ for $u \notin D$.
Clearly, as $D$ is a 2-fair dominating set, every $u$ is adjacent to exactly two vertices labeled~1.
Further, because $|D| = 3$, we have that $\pid(G) \leq 3$.
As $\pid(G) > 2$, we conclude that $\pid(G) = 3$.
\end{proof}
\noindent It is also possible to state the same result in a different way.
To do this, we observe the following.
\begin{proposition}
Let $G=(V,E)$ be a connected graph.
A subset $S \subseteq V$ of size $s$ is an $\ell$-fair dominating set in $G$ if and only if $S$ is an $(s-\ell)$-fair dominating set in $\overline{G}$.
\end{proposition}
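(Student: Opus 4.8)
The plan is to reduce the statement to a single elementary counting identity relating, for each vertex outside $S$, its number of $S$-neighbors in $G$ to its number of $S$-neighbors in $\overline{G}$; the two fairness conditions then become complementary arithmetic statements about the same set $V \setminus S$. First I would record that for every $v \in V$ one has $N_{\overline{G}}(v) = V \setminus (N_G(v) \cup \{v\})$, since a vertex is adjacent in $\overline{G}$ to precisely those vertices it misses in $G$, and no vertex is its own neighbor.

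Next, fix any $v \in V \setminus S$ and intersect with $S$. As $v \notin S$ and $\{v\}$ is disjoint from $N_G(v)$, the term $\{v\}$ contributes nothing, and I obtain
\[
|N_{\overline{G}}(v) \cap S| \;=\; |S \setminus N_G(v)| \;=\; |S| - |N_G(v) \cap S| \;=\; s - |N_G(v) \cap S|.
\]
This identity carries the whole proof: since $V \setminus S$ is the same vertex set in $G$ and in $\overline{G}$, the requirement ``$|N_G(v) \cap S| = \ell$ for every $v \in V \setminus S$'' holds if and only if ``$|N_{\overline{G}}(v) \cap S| = s - \ell$ for every $v \in V \setminus S$.'' In other words, the $\ell$-fair counting condition in $G$ is equivalent to the $(s-\ell)$-fair counting condition in $\overline{G}$.

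The only point demanding care is the domination clause folded into the definition of a fair dominating set. Here the key remark is that for any $k \geq 1$ the equality $|N(v) \cap D| = k$ for all $v \notin D$ already guarantees each such $v$ a neighbor in $D$, so $D$ is automatically dominating. Consequently, whenever $\ell \geq 1$ the fair condition makes $S$ dominate $G$, and whenever $s - \ell \geq 1$ the complementary condition makes $S$ dominate $\overline{G}$; throughout the range $1 \leq \ell \leq s-1$ --- which in particular covers the intended application with $\ell = 2$ and $s = 3$ --- both domination requirements come for free and the biconditional follows at once from the identity above. The boundary values $\ell = 0$ (which forces $S = V$ and hence an empty $V \setminus S$) and $\ell = s$ are degenerate and I would treat them by hand. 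I expect this reconciliation of the domination clause on the two sides, rather than the counting identity, to be the only mildly delicate part of the argument.
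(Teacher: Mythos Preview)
Your argument is correct and is exactly the natural counting computation one expects here. Note, however, that the paper does not actually supply a proof of this proposition: it is stated as an observation without argument, so there is nothing to compare against. Your write-up in fact goes beyond what the paper does by flagging the boundary issue with the domination clause at $\ell \in \{0,s\}$; the paper's formulation tacitly assumes $1 \le \ell \le s-1$ (consistent with its convention that $k$-fair domination is defined only for $k \ge 1$), and in that range your identity $|N_{\overline{G}}(v)\cap S| = s - |N_G(v)\cap S|$ settles the biconditional immediately.
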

\noindent A 1-fair dominating set is also known as a \emph{perfect dominating set} (see Fellows and Hoover~\cite{Fellows1991}).
\begin{corollary}
Let $G=(V,E)$ be a connected graph.
A subset $S \subseteq V$ of size three is a 2-fair dominating set in $G$ if and only if $S$ is a perfect dominating set in $\overline{G}$.
\end{corollary}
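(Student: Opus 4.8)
The plan is to derive the statement as an immediate specialization of the preceding Proposition, so the argument will be very short. First I would instantiate that Proposition with the parameters matching the corollary: take $s = 3$ (the prescribed size of $S$) and $\ell = 2$. The Proposition then asserts that $S$ is a $2$-fair dominating set in $G$ if and only if $S$ is an $(s-\ell)$-fair dominating set in $\overline{G}$, and here $s - \ell = 3 - 2 = 1$. Hence $S$ is a $2$-fair dominating set in $G$ exactly when $S$ is a $1$-fair dominating set in $\overline{G}$.

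Second, I would translate the conclusion into the desired terminology. By the remark stated just before the corollary, a $1$-fair dominating set is precisely a \emph{perfect dominating set}; thus being a $1$-fair dominating set in $\overline{G}$ is the same thing as being a perfect dominating set in $\overline{G}$. Chaining the two equivalences yields the claim in both directions simultaneously, since the Proposition is itself an ``if and only if''.

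The only point requiring a moment's care is that the $(s-\ell)$-fair dominating condition handed back by the Proposition is a genuine dominating set: by definition a $k$-fair dominating set must in particular be dominating, which forces $k \geq 1$. Here $s - \ell = 1 \geq 1$, so no degenerate case arises and the equivalence is meaningful. I do not anticipate any substantive obstacle; all the content is already carried by the Proposition, and the corollary merely records the special case $s = 3$, $\ell = 2$ in the language of perfect domination.
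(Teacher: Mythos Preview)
Your argument is correct and matches the paper exactly: the paper states the corollary without proof, treating it as the immediate specialization of the preceding Proposition with $s=3$ and $\ell=2$, together with the remark that a $1$-fair dominating set is the same as a perfect dominating set. There is nothing to add.
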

\noindent We can then restate our earlier theorem as follows.
\begin{theorem}
\label{thm:pid-three}
A graph $G$ with $\pid(G) > 2$ has $\pid(G) = 3$ if and only if $\overline{G}$ has a perfect dominating set of size~3.
\end{theorem}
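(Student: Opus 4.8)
The plan is to obtain Theorem~\ref{thm:pid-three} as an immediate consequence of two facts already established above, rather than reproving anything about PID-functions directly. The backbone is our earlier characterization, which states that a connected graph $G$ with $\pid(G) > 2$ satisfies $\pid(G) = 3$ if and only if $G$ has a $2$-fair dominating set of size~$3$. The only remaining work is to translate the condition ``$G$ has a $2$-fair dominating set of size~$3$'' into the language of $\overline{G}$, and this translation is exactly what the preceding corollary supplies.

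Concretely, I would proceed in three steps. First, I would invoke the earlier theorem to replace the statement $\pid(G) = 3$ by the existence of a size-$3$ set $D$ that is $2$-fair dominating in $G$. Second, I would apply the corollary, which asserts that a fixed vertex set $S$ of size three is a $2$-fair dominating set in $G$ precisely when the same set $S$ is a perfect dominating set in $\overline{G}$; note that the vertex set is literally unchanged and only the ambient graph is swapped, so no new set needs to be constructed. Third, I would chain these two equivalences by transitivity: $\pid(G) = 3$ holds if and only if $G$ has a size-$3$ $2$-fair dominating set, which in turn holds if and only if $\overline{G}$ has a size-$3$ perfect dominating set. Since both directions of each equivalence are already available, the combined biconditional follows with no further case analysis.

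For completeness I would also recall why the corollary holds, since it carries the genuine content: it specializes the preceding proposition on complements with $s = 3$ and $\ell = 2$, so that $s - \ell = 1$, together with the stated fact that a $1$-fair dominating set is precisely a perfect dominating set. The underlying computation is the degree identity $|N_{\overline{G}}(v) \cap S| = |S| - |N_G(v) \cap S|$ for every $v \notin S$, which converts ``exactly two neighbors in $S$ within $G$'' into ``exactly one neighbor in $S$ within $\overline{G}$.''

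The main point to be careful about is not any calculation but the hypotheses: the earlier theorem and the corollary are both phrased for connected graphs, whereas the present statement is phrased for a general graph $G$. I would therefore either carry the connectedness assumption through to the restatement, or observe that the degree identity above --- and hence the equivalence between size-$3$ $2$-fair dominating sets of $G$ and size-$3$ perfect dominating sets of $\overline{G}$ --- uses no connectivity at all, so the only place connectivity is genuinely needed is inside the earlier PID-characterization. This is the single subtlety I would verify before declaring the proof complete.
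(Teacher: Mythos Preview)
Your proposal is correct and matches the paper's approach exactly: the paper presents Theorem~\ref{thm:pid-three} without a separate proof, simply writing ``We can then restate our earlier theorem as follows,'' i.e., it is obtained by chaining the earlier characterization (via $2$-fair dominating sets of size~$3$) with the corollary translating such sets into perfect dominating sets of $\overline{G}$. Your observation about the dropped connectedness hypothesis is a genuine subtlety that the paper glosses over; the fix you suggest (the degree identity needs no connectivity) is the right one.
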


Let us then proceed to determine the perfect Italian domination number of complete multipartite graphs.
\begin{lemma}
\label{lem:k33-pid}
For every two integers $n_1,n_2 \geq 3$, it holds that $\pid(K_{n_1,n_2}) = 4$.
\end{lemma}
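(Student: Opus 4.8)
The plan is to denote the two sides of the bipartition by $A$ and $B$ with $|A| = n_1$ and $|B| = n_2$, and for a labeling $f$ to write $W_A = \sum_{a \in A} f(a)$ and $W_B = \sum_{b \in B} f(b)$ for the total weight placed on each side. The crucial structural feature of $K_{n_1,n_2}$ is that the open neighborhood of every vertex of $A$ is exactly $B$, and vice versa, so the PID-constraint collapses to a statement about these two scalars: a zero-labeled vertex of $A$ is satisfied precisely when $W_B = 2$, and a zero-labeled vertex of $B$ is satisfied precisely when $W_A = 2$. This reduces the entire problem to reasoning about $W_A$ and $W_B$.

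For the upper bound I would exhibit the labeling that assigns weight $1$ to two vertices of $A$, weight $1$ to two vertices of $B$, and $0$ to every other vertex; this is well-defined since $n_1, n_2 \geq 3 \geq 2$. Every zero-labeled vertex lies on one of the two sides and sees exactly the opposite side, whose total weight is $2$, so the PID-constraint holds for all of them, while the positively labeled vertices are unconstrained. Hence $\pid(K_{n_1,n_2}) \leq 4$.

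For the lower bound I would show that no PID-function $f$ has weight less than $4$. If both $A$ and $B$ contain a zero-labeled vertex, then by the observation above $W_A = 2$ and $W_B = 2$, so the weight is exactly $4$. Otherwise some side, say $A$, is entirely positively labeled, forcing $W_A \geq n_1 \geq 3$; but then $B$ cannot contain a zero-labeled vertex, since such a vertex would require $W_A = 2$. Thus $B$ is also entirely positively labeled and the weight is at least $n_1 + n_2 \geq 6$. In every case the weight is at least $4$, and combined with the upper bound this yields $\pid(K_{n_1,n_2}) = 4$.

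The argument is short, and the only step requiring care is the treatment of an all-positively-labeled side in the lower bound: here one must invoke $n_1, n_2 \geq 3$ to force $W_A \geq 3$, strictly exceeding the value $2$ that a zero vertex on the opposite side would demand. This is exactly what excludes weight-$3$ witnesses and pins the value at $4$ rather than $3$; for $n_1 = n_2 = 2$ the same configuration would instead collapse to $\pid(K_{2,2}) = 2$, consistent with the earlier characterization.
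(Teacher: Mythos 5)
Your proof is correct, and it takes a genuinely different route from the paper's. The paper obtains the lower bound by invoking its general machinery: Proposition~\ref{prop:dom} rules out weight~2, and then Lemma~\ref{lem:pid3-3ones} together with Theorem~\ref{thm:pid-three} rules out weight~3 by observing that $\overline{K_{n_1,n_2}} = K_{n_1} \cup K_{n_2}$ admits no perfect dominating set of size three; the upper bound is witnessed by placing a label of $2$ on one vertex of each partite set. You instead exploit the fact that in $K_{n_1,n_2}$ every vertex's open neighborhood is the entire opposite side, reducing the PID-condition to the two scalars $W_A$ and $W_B$: a zero on one side forces the opposite side's total weight to equal $2$, and a side with no zeros has weight at least $n_i \geq 3$, so every PID-function has weight exactly $4$ or at least $6$. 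Your argument is self-contained and elementary, avoids the detour through fair domination and complements, and actually yields a slightly stronger structural conclusion (no PID-function of weight $5$ exists either); the paper's approach, by contrast, buys consistency with the surrounding section, reusing the characterizations it has just developed. Your choice of upper-bound witness (two $1$'s per side rather than one $2$ per side) is equally valid. The closing sanity check against $K_{2,2}$ is a nice touch and consistent with the paper's Proposition on $K_{2,n}$.
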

\begin{proof}
Let $G = K_{n_1,n_2}$. By Proposition 5, we have that $\pid(G) \geq 3$.
Further, the complement $\overline{G}$ of $G$ is the disjoint union of two cliques $K_{n_1}$ and $K_{n_2}$, and so every perfect dominating set in $\overline{G}$ has size two. Thus, Theorem~16 implies that $\pid(G) \geq 4$.
A matching upper bound is given by a function which assigns a label of 2 to exactly one vertex in each partite set, while setting all remaining labels to~0.
This completes the proof.
\end{proof}

\begin{lemma}
\label{lem:tripartite}
For every three integers $n_1,n_2,n_3 \geq 3$, it holds that $\pid(K_{n_1,n_2,n_3}) = 3$.
\end{lemma}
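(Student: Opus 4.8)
The plan is to prove the two bounds $\pid(G) \le 3$ and $\pid(G) \ge 3$ separately, where $G = K_{n_1,n_2,n_3}$. Write $V_1,V_2,V_3$ for the three partite sets, with $|V_i| = n_i \ge 3$, and recall that two vertices of $G$ are adjacent precisely when they lie in different parts. Since the upper-bound construction will be an explicit PID-function, the only nontrivial part of the argument is establishing that no PID-function of weight $2$ can exist.

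For the upper bound $\pid(G) \le 3$ I would exhibit a PID-function of weight~$3$: choose one vertex $a_i \in V_i$ from each part, set $f(a_i) = 1$, and set $f \equiv 0$ everywhere else. To verify this is a PID-function it suffices to inspect each vertex $v$ with $f(v) = 0$. Such a $v$ lies in some part, say $V_1$, whence $N(v) = V_2 \cup V_3$, and the only positively labelled neighbours of $v$ are $a_2$ and $a_3$; thus $\sum_{u \in N(v)} f(u) = f(a_2) + f(a_3) = 2$, as required. The symmetric checks for $v \in V_2$ and $v \in V_3$ finish the verification, so $\pid(G) \le 3$.

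For the lower bound I would note that $\pid(G) \ge 2$ is immediate (a weight-$1$ labelling leaves some weight-$0$ vertex with insufficient neighbour weight), and then rule out $\pid(G) = 2$ using Proposition~\ref{prop:dom}, i.e.\ show that $G$ is not the join of some $G_1$ and $G_2$ with $G_1 \in \{K_1, 2K_1, K_2\}$. Concretely I would contradict each weight-$2$ labelling pattern. A single vertex $v$ of weight~$2$ cannot dominate $G$: if $v \in V_i$ then, as $n_i \ge 3$, some other vertex of $V_i$ is non-adjacent to $v$. For two weight-$1$ vertices $u,v$ I would split into cases. If $u,v$ lie in the same part $V_i$, then since $n_i \ge 3$ there is a third vertex $w \in V_i$ adjacent to neither, giving $\sum_{x \in N(w)} f(x) = 0$. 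If instead $u \in V_i$ and $v \in V_j$ with $i \ne j$, then any $w \in V_i \setminus \{u\}$ (which exists as $n_i \ge 2$) is adjacent to $v$ but not to $u$, giving $\sum_{x \in N(w)} f(x) = 1$. In every case the candidate violates the PID condition, so $\pid(G) > 2$, hence $\pid(G) \ge 3$.

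Combining the two bounds gives $\pid(G) = 3$. I expect the main obstacle to be the lower-bound case analysis, and in particular not overlooking the \emph{same-part} placement of the two weight-$1$ vertices: this is the only case that genuinely consumes the hypothesis $n_i \ge 3$, since if some part had size~$2$ one could place both $1$'s there and obtain a weight-$2$ PID-function (recovering $\pid(K_{2,n}) = 2$). The upper bound, by contrast, is a routine verification that works for any $n_i \ge 1$.
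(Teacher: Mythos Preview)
Your argument is correct. The lower bound is handled the same way as in the paper (via Proposition~\ref{prop:dom}), though you spell out the case analysis explicitly rather than simply citing it. For the upper bound you take a slightly more direct route: the paper observes that $\overline{G}$ is a disjoint union of three cliques, picks one vertex from each component as a perfect dominating set of size~$3$, and then invokes Theorem~\ref{thm:pid-three} to conclude $\pid(G)=3$; you instead write down the corresponding PID-function (one vertex of weight~$1$ per part) and verify it by hand. The underlying witness is identical---your $\{a_1,a_2,a_3\}$ is exactly the $2$-fair dominating set that the paper's complement argument produces---so the difference is purely one of packaging: the paper's route exercises its characterization theorem, while yours is self-contained and avoids that machinery.
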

\begin{proof}
Let us denote $G = K_{n_1,n_2,n_3}$.
By Proposition~\ref{prop:dom}, $\pid(G) \geq 3$.
To give a matching upper bound, it suffices to notice that $\overline{G}$ is a disjoint union of three cliques $K_{n_1}$, $K_{n_2}$, and $K_{n_3}$.
A perfect dominating set of size three in $\overline{G}$ is given by choosing exactly one vertex from each component. 
By Theorem~\ref{thm:pid-three}, we conclude that $\pid(G) = 3$.
\end{proof}

\begin{lemma}
\label{lem:larger}
Given $k > 3$ integers $n_1,n_2,\ldots,n_k \geq 3$, it holds that $\pid(K_{n_1,n_2,\ldots,n_k}) = n_1 + n_2 + \cdots + n_k$.
\end{lemma}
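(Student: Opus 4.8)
The plan is to establish the two bounds separately, with the upper bound being immediate and essentially all of the work concentrated in the lower bound. For the upper bound I would exhibit the all-ones labeling $f(v)=1$ for every $v$: it assigns weight $0$ to no vertex, so the PID-constraint is vacuously satisfied, and its weight is exactly $n = n_1+\cdots+n_k$. Hence $\pid(K_{n_1,\ldots,n_k}) \leq n$.

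For the lower bound I would prove the stronger structural claim that \emph{no} PID-function of $G := K_{n_1,\ldots,n_k}$ assigns the label $0$ to any vertex; once this is known, every vertex carries weight at least $1$, so the weight is at least $n$. To set things up, write $P_1,\ldots,P_k$ for the partite sets, let $W_i := \sum_{v \in P_i} f(v)$ be the weight placed inside $P_i$, and let $W := \sum_i W_i$ be the total weight. The key observation exploits the structure of $G$: the neighborhood of any $v \in P_i$ is exactly $V \setminus P_i$, so if $f(v) = 0$ the PID-condition reads $\sum_{u \notin P_i} f(u) = 2$, i.e.\ $W - W_i = 2$. Thus \emph{every partite set containing a $0$-vertex has weight exactly $W-2$}.

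Now suppose toward a contradiction that some vertex receives label $0$, and let $S$ be the nonempty set of indices $i$ for which $P_i$ contains a $0$-vertex, with $m := |S| \geq 1$. From $W_i = W-2 \geq 0$ for $i \in S$ we obtain $W \geq 2$, and a short computation gives $\sum_{i \notin S} W_i = W - m(W-2) = (1-m)W + 2m \leq 2$, using $m \geq 1$ and $W \geq 2$. On the other hand, each part $P_i$ with $i \notin S$ consists entirely of vertices of weight at least $1$, so $W_i \geq n_i \geq 3$. These two facts are incompatible unless there is no index outside $S$, which forces $m = k$. But then $W = k(W-2)$, whence $W = 2k/(k-1) = 2 + \tfrac{2}{k-1}$, which is not an integer for $k > 3$, contradicting that $W$ is a sum of integer labels. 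Hence no $0$-vertex exists and $\pid(G) \geq n$, completing the argument.

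I expect the main obstacle to be precisely the final case, in which \emph{every} partite set contains a $0$-vertex: here the crude counting bound $\sum_{i \notin S} W_i \leq 2$ carries no content (the sum is empty), and the contradiction must instead come from the non-integrality of $W = 2k/(k-1)$. This is exactly the step that uses the hypothesis $k > 3$ (for $k = 3$ one instead gets the integer value $W = 3$, consistent with Lemma~\ref{lem:tripartite}), whereas the hypothesis $n_i \geq 3$ is what powers the bound $W_i \geq 3$ used to dispose of all the cases with a $0$-free part.
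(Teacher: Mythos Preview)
Your proof is correct and takes a genuinely different route from the paper's. The paper argues structurally from a single $0$-vertex $u$: the weight-$2$ contribution in $N(u)$ meets at most two partite sets, so together with $V_i$ at most three parts are ``touched''; since $k>3$ there is a further part $V_h$ entirely labeled~$0$, and the vertices of $V_h$ (being adjacent to $S_u$) force all of $V_i$ to be labeled~$0$ as well, leaving total weight~$2$, which contradicts Proposition~\ref{prop:dom}. Your argument instead extracts the uniform equation $W_i = W-2$ for every part containing a $0$-vertex and finishes by counting and integrality: either some part is $0$-free, whence its weight $\geq n_i \geq 3$ exceeds the residual budget $\sum_{i\notin S} W_i \leq 2$, or all parts contain a $0$-vertex and $W = 2k/(k-1)$ fails to be an integer for $k>3$. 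Your approach avoids the case split on the shape of $S_u$ and the appeal to Proposition~\ref{prop:dom}, and it makes transparent exactly where each hypothesis enters (indeed it recovers the value $W=3$ at $k=3$ consistent with Lemma~\ref{lem:tripartite}); the paper's argument is more hands-on but effectively lands in your ``$m<k$'' case by using $k>3$ to produce the untouched part $V_h$.
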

\begin{proof}
Let $G$ be the complete multipartite graph $K_{n_1,n_2,\ldots,n_k}$ of order $n = n_1 + n_2 + \cdots + n_k$.
For the sake of contradiction, assume that a PID-function $f$ of $G$ with weight less than~$n$ exists.
By the pigeonhole principle, there must exist a vertex $u$ in a set $V_i$ of the $k$-partition of $G$ for some $1 \leq i \leq k$ with $f(u) = 0$. As such, to satisfy the conditions of PID-functions, the labels in the neighborhood of $u$ must account for a total value of $2$. This can be done in three ways:\\

%Let us consider all the possibilities as to how the neighbors of $u$ must be labeled.\\
%
\noindent {\bf Case 1:}\ There is a neighbor $v$ of $u$ in some $V_j$ with $j \neq i$ such that $f(v) = 2$.

%Because $f$ is a PID-function, it follows immediately that every vertex in $V \setminus (V_i \cup \{ v \})$ must be labeled 0, %as they all neighbor $u$. This includes all vertices in $V_j \setminus \{ v \}$. Since $n_j \geq 3$ such a vertex exists.
%Then, every such vertex in $V_j$ must have neighbors of weight exactly 2.
%Consequently, some vertex or pair of vertices of $V_i \setminus \{ u \}$ account for a total weight of 2. 
%But now a vertex of label 0 in $V_q$ for any $q \neq i$ and $q \neq j$ (which must exist, since $k > 2$) has neighbors of weight %4 contradicting the fact that $f$ is a PID-function.\\
%
\noindent {\bf Case 2:}\ There are neighbors $v$ and $v'$ of $u$, both in some $V_j$ with $j \neq i$ such that $f(v) = f(v') = 1$.

%By a similar argument to Case~1, since $n_j \geq 3$, there is a vertex $v''$ distinct from $v$ and $v'$ in $V_j$ whose neighbors %in $V_i$ must have weight 2.
%But similarly to Case~1, there is then a vertex in $V_q$ with label 0 whose neighbors have weight 4, a contradiction.\\
%
\noindent {\bf Case 3:}\ There are neighbors $v$ and $v'$ of $u$ in $V_j$ and $V_m$, respectively, with $j$, $m$, and $i$ pairwise distinct, such that $f(v) = f(v') = 1$.\\

%Similarly to Case~1, we again observe that every vertex in $V \setminus (V_i \cup \{ v, v' \})$ must have label~0.
%Then, some vertex in $V_j$ with label~0 requires that some vertex in $V_i$ has label~1 (as it neighbors $v'$). But then a vertex %of $V_q$, with $q$ distinct from $i$, $j$, and $m$, has label~0 and neighbors with total weight 3, a contradiction.

We let $S_u = \{v \in N(u) \mid f(v) \neq 0 \}$ and observe that $S_u$ always includes vertices from no more than two partite sets. Clearly, every vertex in $V \setminus (V_i \cup S_u)$ has to be labeled 0. Since $k > 3$ this includes some partitite set $V_h$, with $i \neq h$ and $V_h \cap S_u = \emptyset$. Since the vertices in this set are labeled 0 and adjacent to the vertices of $S_u$, every vertex in $V_i$ must also be labeled $0$. In other words, the only vertices with positive labels are the ones in $S_u$, meaning that $f$ has a weight of $2$. By Proposition~\ref{prop:dom}, this is a contradiction, completing the proof.

\end{proof}
\noindent The previous lemmas together prove the following.
\begin{theorem}
\label{thm:compl-multi}
Let $G = K_{n_1,n_2,\ldots,n_k}$ be the complete $k$-partite graph on $n = n_1 + n_2 + \cdots + n_k$ vertices, where $n_i \geq 3$ for each $1 \leq i \leq k$. Then
\[
  \pid(G) = \left.
  \begin{cases}
    4, & \text{if } k = 2, \\
    3, & \text{if } k = 3, \\
    n, & \text{if } k \geq 4.
  \end{cases}\right.
\]
\end{theorem}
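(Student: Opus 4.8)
The plan is to recognize that the three regimes of the statement are precisely the three cases already resolved in Lemmas~\ref{lem:k33-pid}, \ref{lem:tripartite}, and \ref{lem:larger}, so that the theorem follows by assembling them. For $k = 2$ I would cite Lemma~\ref{lem:k33-pid} to obtain $\pid(G) = 4$; for $k = 3$, Lemma~\ref{lem:tripartite} to obtain $\pid(G) = 3$; and for $k \geq 4$, Lemma~\ref{lem:larger} to obtain $\pid(G) = n$. These three ranges of $k$ exhaust all admissible values (the construction forces $k \geq 2$), so beyond invoking the lemmas there is nothing to do.

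Were I to prove the three lemmas directly rather than cite them, I would in each case pair a lower bound with a matching construction, and the common engine would be the characterizations of Section~\ref{sec:basic}. Since every part has at least three vertices, $G$ has no dominating vertex and cannot be written as the join required by Proposition~\ref{prop:dom}, so $\pid(G) \geq 3$ in every case. To decide whether equality with $3$ holds, I would use Theorem~\ref{thm:pid-three}, which reduces the question to whether $\overline{G}$ admits a perfect dominating set of size exactly~$3$. The crucial observation is that $\overline{G}$ is the disjoint union of the cliques $K_{n_1}, \dots, K_{n_k}$, and in a clique a perfect dominating set must either select a single vertex (so that the remaining vertices each have exactly one neighbor in the set) or the whole clique; hence across the components the smallest perfect dominating set is a transversal of size~$k$. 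This at once yields a size-$3$ perfect dominating set when $k = 3$ (giving $\pid(G) = 3$), while for $k = 2$ no perfect dominating set of size~$3$ exists when $n_1, n_2 \geq 3$, so $\pid(G) \geq 4$; the matching upper bound of $4$ comes from labeling one vertex $2$ in each part.

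The hard part is Lemma~\ref{lem:larger}, the case $k \geq 4$, where I must rule out every PID-function of weight below~$n$. The strategy is a propagation argument. Assuming such an $f$ exists, pigeonhole produces a vertex $u$ in some part $V_i$ with $f(u) = 0$, and the PID-constraint forces the set $S_u$ of positively labeled neighbors of $u$ to sum to~$2$ and hence to meet at most two partite sets. Because $k \geq 4$, some part $V_h \neq V_i$ is disjoint from $S_u$; each of its vertices is a $0$-vertex already seeing the full weight~$2$ from $S_u$, which forbids any further positively labeled neighbor and therefore forces all of $V_i$ to be labeled~$0$. Tracking this carefully shows that the only positively labeled vertices are those of $S_u$, so $f$ has weight~$2$ --- contradicting $\pid(G) > 2$. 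I expect the only delicate point to be the bookkeeping of exactly which partite sets may carry positive labels and checking that the forced zeros are mutually consistent; the remaining case analysis on the shape of $S_u$ (one vertex of weight~$2$, two weight-$1$ vertices in one part, or two weight-$1$ vertices in distinct parts) is routine.
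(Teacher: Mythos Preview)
Your proposal is correct and matches the paper's approach exactly: the paper's proof of Theorem~\ref{thm:compl-multi} is the single sentence ``The previous lemmas together prove the following,'' i.e., it merely assembles Lemmas~\ref{lem:k33-pid}, \ref{lem:tripartite}, and \ref{lem:larger}. Your additional sketches of how one would reprove those lemmas are accurate and track the paper's own arguments, but they are not needed for the theorem itself.
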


\begin{remark}
The complete multipartite graph $G = K_{n_1,n_2,\ldots,n_k}$ for $k \geq 4$ shows that the difference between $\romtwo(G)$ and $\pid(G)$ can be made arbitrarily large. Indeed, by Theorem~\ref{thm:compl-multi} we have that $\pid(G)$ is equal to the order of $G$, but $\romtwo(G) = 3$ as witnessed by labeling exactly one vertex~1 from three different sets of the $k$-partition of $G$ and labeling the remaining vertices~0. 
\end{remark}

\begin{remark}
Let $G = K_{n_1,n_2,n_3}$ be a complete tripartite graph with $n_i \geq 3$ for $1 \leq i \leq 3$. By Lemma~\ref{lem:tripartite}, $\pid(G) = 3$ while $\rom(G) = 4$ (see~\cite[Proposition~8]{Cockayne2004}).
Thus, it is not true that $\rom(G) \leq \pid(G)$ in general (cf.\ Proposition~\ref{prop:pid-bounds}).
\end{remark}

\section{On upper bounds for restricted graph classes}
\label{sec:upper-bounds}

Haynes and Henning~\cite{Haynes2019} proposed the problem of determining the best possible constant $c_\mathcal{G}$ such that $\pid(G) \leq c_\mathcal{G} \cdot n$ for all $n$-vertex graphs $G$ belonging to a particular class $\mathcal{G}$ of graphs.
In particular, they showed that if $\mathcal{G}$ is the class of connected bipartite graphs, then $c_\mathcal{G} = 1$, whereas if $\mathcal{G}$ is the class of trees (on at least 3 vertices), then $c_\mathcal{G} = 4/5$.
Further, the authors suggested to study the problem further when $\mathcal{G}$ would be e.g., the class of planar graphs or regular graphs.

In the following subsections, we settle precisely the question when $\mathcal{G}$ is the class of connected planar graphs by proving, perhaps surprisingly, that $c_\mathcal{G} = 1$.
In addition, we also completely settle the question when $\mathcal{G}$ is the class of connected cubic graphs by proving that $c_\mathcal{G} = 2/3$.
Further, when $\mathcal{G}$ is the class of $k$-regular graphs for $k \geq 5$, we show that $c_\mathcal{G} = 1$.
We conclude by observing that $c_\mathcal{G} = 1$ when $\mathcal{G}$ is the class of connected split graphs, implying that $c_\mathcal{G} = 1$ also when $\mathcal{G}$ is any superclass of split graphs, like the class of chordal graphs or more generally, the perfect graphs.

\subsection{Planar graphs}
In this subsection, we describe an infinite family of connected planar graphs $G$ that have $\pid(G)$ equal to their order, thus proving that $c_\mathcal{G} = 1$ when $\mathcal{G}$ is the class of connected planar graphs.

Let $J_1$ be the connected 10-vertex planar graph that is formed by adding two dominating vertices to $2K_2$ and then finishing by connecting a pendant vertex to every vertex except for two vertices of degree three (see Figure~\ref{fig:jewel}).
In particular, name the four support vertices of $J_1$ (i.e., those with a pendant in their neighborhood) so that $u$ and $v$ are those with degree five, and $x$ and $y$ are those with degree four.
The graph $J_2$ is obtained via \emph{widening} $J_1$ by connecting both $u$ and $v$ with the pendants of $x$ and $y$, say $x'$ and $y'$, respectively, and by introducing a new pendant vertex to each of $x'$ and $y'$.
The widening of $J_1$ to obtain $J_2$ is illustrated in Figure~\ref{fig:jewel}.
In total, a widening operation adds two vertices and six edges.
In general, the graph $J_\ell$ for any $\ell \geq 3$ is obtained recursively by widening $J_{\ell-1}$, which in turn is obtained by widening $J_{\ell-2}$, and so on.
Our goal is to show that $\pid(J_\ell) = n$, where $n$ is the order of $J_\ell$.
To this end, we make the following claims concerning any PID-function with weight less than~$n$.

\begin{figure}[t]
    \centering
        \includegraphics[width=0.75\textwidth,keepaspectratio]{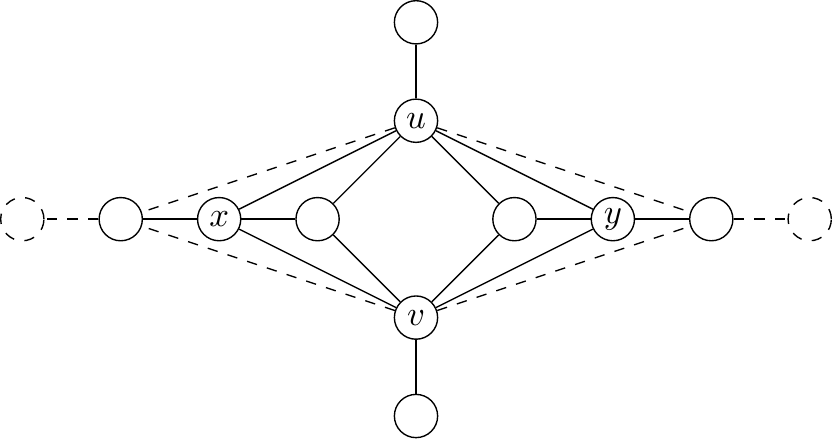} 
    \caption{The graph $J_1$ represented by solid vertices and edges. The graph $J_2$ is obtained by widening $J_1$, i.e., by adding the dashed elements to $J_1$.}
    \label{fig:jewel}
\end{figure}

\begin{lemma}
\label{lem:uv-sum-two}
Let $f$ be a PID-function of $J_\ell$ with weight less than $n$. 
It must hold for the support vertices $u$ and $v$ that $f(u) + f(v) \leq 2$.
\end{lemma}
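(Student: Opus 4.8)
The plan is to argue by contraposition: assuming $f(u) + f(v) \geq 3$, I will show that $f$ must have weight at least $n$, contradicting the hypothesis that its weight is strictly less than $n$. Since every label lies in $\{0,1,2\}$, the assumption $f(u)+f(v)\geq 3$ forces both $f(u)\geq 1$ and $f(v)\geq 1$, with at least one of the two equal to $2$. The proof rests on two structural features of $J_\ell$ coming from the widening construction: first, $u$ and $v$ share a large common neighborhood (four common neighbors in $J_1$, and each widening connects both $u$ and $v$ to one further vertex on each arm, giving $2\ell+2$ common neighbors in $J_\ell$); and second, $J_\ell$ has exactly four pendant vertices, namely the private pendants $u'$ and $v'$ of $u$ and $v$ together with the two outermost vertices of the widened arms, and the support of each of these four pendants is either $u$, $v$, or a common neighbor of $u$ and $v$.

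First I would show that every common neighbor of $u$ and $v$ receives a positive label. Let $w$ be adjacent to both $u$ and $v$ and suppose $f(w)=0$; then the PID-condition at $w$ requires $\sum_{z\in N(w)} f(z) = 2$, whereas $\sum_{z \in N(w)} f(z) \geq f(u)+f(v) \geq 3$, a contradiction. Hence $f(w)\geq 1$ for each of the $2\ell+2$ common neighbors, which together with $f(u)+f(v)\geq 3$ already accounts for weight roughly $2\ell+5$.

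The main obstacle is that this naive count falls short of $n=2\ell+8$ by a constant, because the four pendants may legitimately be labeled $0$. The step that closes the gap is to pair each pendant with its support: for a pendant $p$ with support $s$ I claim $f(p)+f(s)\geq 2$. Indeed, if $f(p)\geq 1$ then, since $s$ is $u$, $v$, or a common neighbor and hence already known to be positively labeled, we get $f(p)+f(s)\geq 1+1=2$; and if $f(p)=0$, the PID-condition at the degree-one vertex $p$ forces $f(s)=2$, so again $f(p)+f(s)\geq 2$. Summing this over the four disjoint pendant–support pairs contributes weight at least $8$ from the eight vertices involved, while each of the remaining $2\ell$ common neighbors contributes at least $1$. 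Since these two groups partition $V(J_\ell)$, the total weight is at least $2\ell+8=n$, giving the desired contradiction and hence $f(u)+f(v)\leq 2$.
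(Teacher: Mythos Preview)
Your proof is correct and follows essentially the same approach as the paper: both argue by contraposition, observe that every common neighbour of $u$ and $v$ must receive a positive label once $f(u)+f(v)\geq 3$, and then close the gap to $n$ by pairing each of the four pendants with its support to guarantee a contribution of at least $2$ per pair. Your write-up is in fact slightly more explicit about the structural bookkeeping (the count $2\ell+2$ of common neighbours and the verification that the supports are $u$, $v$, or common neighbours), but the underlying decomposition and the key ideas are identical.
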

\begin{proof}
If the latter was not the case, i.e., if $f(u) + f(v) > 2$, then none of the remaining non-pendant vertices could be labeled~0 because $u$ and $v$ are in the neighborhood of each such vertex. Importantly, this includes all the non-support vertices, which must therefore have an average weight of at least 1.
Further, if $f$ labels any pendant with 0, it must also label its support with 2. Thus, every pair comprising a pendant vertex and its support vertex will always contribute a weight of at least 2, again implying an average weight of at least 1 between them.
Combining the results on non-support vertices and pendant-support pairs, we arrive at a PID-function with weight at least $n$, proving the contrapositive of our lemma.
\end{proof}

\begin{lemma}
\label{lem:uv-nonzero}
Let $f$ be a PID-function of $J_\ell$.
The function $f$ must label $f(u) \neq 0$ and $f(v) \neq 0$.
\end{lemma}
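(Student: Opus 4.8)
The plan is to prove Lemma~\ref{lem:uv-nonzero} by contradiction, exploiting the structure around $u$ and $v$ together with the previously established Lemma~\ref{lem:uv-sum-two}. Recall that $u$ and $v$ are the two degree-five support vertices, each carrying a pendant, and that they are adjacent to each other as well as to both dominating vertices (added to $2K_2$) and to the degree-four supports $x$ and $y$. The key local fact I would isolate first is that $u$ (and symmetrically $v$) is a support vertex, so it has a pendant neighbor $u'$ of degree one. If $f(u) = 0$, then the single neighbor of the pendant $u'$ is $u$ itself, forcing $f(u') = \emph{something}$; more usefully, consider the pendant's own constraint and $u$'s constraint simultaneously.

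**First I would** suppose toward a contradiction that $f(u) = 0$. Since $u$ has a pendant neighbor $u'$, and $u$ is the \emph{only} neighbor of $u'$, the PID-condition applied at $u'$ (if $f(u')=0$) would require $f(u) = 2$, contradicting $f(u)=0$; hence $f(u') \neq 0$. Now apply the PID-condition at $u$ itself: since $f(u)=0$, the weights on $N(u)$ must sum to exactly $2$. The neighborhood $N(u)$ contains $v$, the two dominating vertices, the support $x$, the widened connections, and the pendant $u'$ with $f(u') \geq 1$. I would then combine this with Lemma~\ref{lem:uv-sum-two}, which gives $f(u)+f(v) \leq 2$, i.e. here $f(v) \leq 2$, to pin down how weight can be distributed.

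**The main obstacle** will be bounding the weight contributions carefully enough to derive the contradiction: I need to show that assuming $f(u)=0$ either violates the exact-sum-$2$ condition at $u$ or at $v$, or else forces so much weight onto the remaining vertices (their pendant-support pairs and the non-support core) that the total reaches $n$, contradicting that $f$ has weight less than $n$. Concretely, I expect the argument to split on the value $f(v) \in \{0,1,2\}$. If $f(v)=0$ as well, then the symmetric pendant argument gives $f(v')\neq 0$, and the two exact-$2$ conditions at $u$ and $v$ over their overlapping neighborhoods (which share the two dominating vertices, $x$, $y$, and each other) will over-constrain the labels on the dominating vertices and supports; I anticipate this case yields the quickest contradiction because $u$ and $v$ demand incompatible weight patterns on their common neighbors.

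**The remaining step** is to handle $f(v) \in \{1,2\}$ with $f(u)=0$: here I would use that $u$'s exact-sum condition must be met using $f(v)$ plus $f(u')\geq 1$ plus contributions from the dominating vertices and $x$, while tracking that every pendant not labeled positively forces its support to be labeled $2$ (as in Lemma~\ref{lem:uv-sum-two}). The goal is to show the induced constraints propagate outward and force an average weight of at least $1$ per vertex, hence total weight $\geq n$, the desired contradiction. I would present the cleanest such case explicitly and note that the symmetric role of $u$ and $v$ lets me conclude $f(u)\neq 0$ and $f(v)\neq 0$ simultaneously; the figure (Figure~\ref{fig:jewel}) will be referenced to justify the precise adjacencies used in each local count.
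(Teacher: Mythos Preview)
Your proposal has two substantive problems.

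First, your description of $J_\ell$ is wrong. The vertices $u$ and $v$ \emph{are} the two dominating vertices that were added to $2K_2$; they are not separate from them, and in particular $u$ and $v$ are \emph{not} adjacent to one another. Each of $u,v$ is adjacent to the four $2K_2$ vertices and to its own pendant, which is exactly how the degree comes out to five. Your plan repeatedly uses the supposed edge $uv$ and supposed additional ``dominating vertices'' inside $N(u)$, so the local counts you intend to run do not match the actual graph.

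Second, the lemma is stated for an \emph{arbitrary} PID-function $f$, with no hypothesis that the weight is below $n$. You therefore cannot invoke Lemma~\ref{lem:uv-sum-two}, and reaching ``total weight $\geq n$'' is not a contradiction. The paper's proof avoids all weight-counting and is a short local argument: assume $f(u)=0$; the pendant $u'$ of $u$ must then carry a positive label. If $f(u')=2$, every other neighbor of $u$ (the four $2K_2$ vertices) is forced to $0$; a degree-three $2K_2$ vertex then forces $f(v)=2$, and now $y$ (which has label $0$) sees both $v$ with label $2$ and its own pendant $y'$ with a necessarily positive label, so its neighborhood weight exceeds $2$. If $f(u')=1$, exactly one further neighbor $a$ of $u$ gets label $1$ and the rest get $0$; again a degree-three vertex forces $f(v)=2$, and a $0$-labeled neighbor of $a$ sees both $a$ and $v$, giving neighborhood weight $3$. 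Either way the PID condition fails outright, independent of the total weight. Your weight-accumulation strategy would at best prove the restricted statement for functions of weight below $n$; that happens to suffice for the downstream application, but it does not establish Lemma~\ref{lem:uv-nonzero} as written.
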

\begin{proof}
For the sake of contradiction, suppose that $f(u) = 0$.
Because $f$ is a PID-function, it holds that $\sum_{u' \in N(u)} f(u') = 2$.
Clearly, the pendant of $u$ cannot be labeled~0, so first suppose that pendant of $u$ was labeled~2.
It follows that every other vertex adjacent to $u$ must be labeled~0.
But now it must be the case that $f(v) = 2$ and $f(y') \neq 0$, but $\{v, y'\} \subseteq N(y)$ with $f(y) = 0$, contradicting the fact that $f$ is a PID-function.
So it must be the case that the pendant of $u$ is labeled~1. 
It follows that precisely one other neighbor $a$ of $u$ is labeled~1 while the rest are labeled~0. Given that this also includes at least one non-support vertex and its neighbors which are not $u$ or $v$, this means that once again $f(v) = 2$.
But now a neighbor of $a$, labeled~0, is adjacent to $a$ (with label~1) and $v$ (with label~2), contradicting the fact that $f$ is a PID-function.
We conclude that $f(u) \neq 0$.
By a symmetric argument, $f(v) \neq 0$ under any valid PID-function $f$ (including those whose weight is less than~$n$, if any).
\end{proof}

\begin{lemma}
For any integer $\ell \geq 1$, it holds that $\pid(J_\ell) = n$.
\end{lemma}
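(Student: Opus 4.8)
The plan is to combine the two preceding lemmas with a careful accounting argument that forces every vertex of $J_\ell$ to contribute an average weight of at least $1$, thereby establishing $\pid(J_\ell) \geq n$; the matching upper bound $\pid(J_\ell) \leq n$ is immediate since labeling every vertex $1$ yields a valid PID-function. I proceed by contradiction, assuming a PID-function $f$ of weight strictly less than $n$ exists. By Lemma~\ref{lem:uv-nonzero}, $f(u) \neq 0$ and $f(v) \neq 0$, so $f(u) + f(v) \geq 2$; combined with Lemma~\ref{lem:uv-sum-two}, this pins down $f(u) + f(v) = 2$ exactly, which means $\{f(u),f(v)\}$ is either $\{1,1\}$ or $\{2,0\}$, but since both are nonzero we must have $f(u) = f(v) = 1$.

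The heart of the argument is then to show that the labeling $f(u) = f(v) = 1$ propagates rigidly through the rest of the graph. Because $u$ and $v$ are high-degree support vertices adjacent to most of the non-pendant structure, any non-support vertex $w$ labeled $0$ already sees weight $f(u) + f(v) = 2$ from $u$ and $v$ alone (I would first verify that every non-support vertex is indeed adjacent to both $u$ and $v$, using the construction of $J_1$ and the fact that widening preserves these adjacencies), forcing every other neighbor of $w$ to be labeled $0$. I would chase these forced zeros outward along the widened ``ladder'' of $x_i,y_i$ support vertices and their pendants, showing that once $u$ and $v$ carry weight $1$ each, the PID-constraints at the degree-three vertices and at the pendants cannot all be satisfied without assigning extra weight somewhere, ultimately driving the total weight up to at least $n$.

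The main obstacle I anticipate is the bookkeeping for the pendant–support pairs and the inductive structure introduced by widening. Each widening step adds two new support vertices and their pendants while rewiring $u$ and $v$ into the new support vertices; I would need an invariant that survives the recursion, most naturally phrased as: in any weight-$(<n)$ PID-function, each pendant–support pair contributes total weight at least $2$ (a pendant labeled $0$ forces its support to be labeled $2$, and a pendant labeled $1$ or $2$ already contributes $1$ with its support contributing at least $1$ under the forced-zero propagation), while the two ``core'' vertices $u,v$ contribute $2$ and the two remaining degree-three vertices contribute the leftover. Summing these lower bounds across all $\ell$ pendant–support pairs and the core yields total weight at least $n$, contradicting the assumption.

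Concretely, I would structure the final write-up as follows: establish $f(u)=f(v)=1$ from the two lemmas; prove the forced-zero propagation claim that every $0$-labeled non-support vertex has all its non-$\{u,v\}$ neighbors labeled $0$; derive that this configuration cannot satisfy the PID-constraint at the degree-three vertices unless the support vertices $x_i,y_i$ carry compensating weight; and finally sum the per-pair and core contributions to reach $n$. The upper bound being trivial, the two inequalities together give $\pid(J_\ell) = n$, which is exactly the claim.
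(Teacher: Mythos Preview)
Your opening move---combining the two lemmas to force $f(u)=f(v)=1$---is exactly right and matches the paper. Your forced-zero propagation observation is also the key mechanism the paper uses. Where your plan goes wrong is in the final accounting, and this stems from a misreading of the construction of $J_\ell$.

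There are always exactly \emph{four} pendant--support pairs in $J_\ell$, not $\ell$ of them: the supports are $u$, $v$, and the two outermost degree-four vertices (the ones created in the most recent widening). Widening does not add new pendants net; it converts the two old degree-four supports into non-support interior vertices and creates two new supports with new pendants. So summing ``across all $\ell$ pendant--support pairs'' gives you a constant, not something growing with $n$, and your proposed sum never reaches $n$. The $2\ell$ interior non-pendant vertices are precisely what your accounting leaves unaddressed.

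The paper closes this gap with a clean dichotomy rather than a per-pair lower bound. After fixing $f(u)=f(v)=1$, every non-pendant vertex other than $u,v$ is a common neighbour of $u$ and $v$. If \emph{any} such common neighbour $a$ has $f(a)=0$, then all non-$\{u,v\}$ neighbours of $a$ are forced to $0$; this zero propagates along the chain to the outermost support $x$, and then the pendant $x'$ (whose only neighbour is $x$) cannot be labeled $0$, $1$, or $2$ consistently---an outright contradiction, not merely ``compensating weight'' owed elsewhere. Hence \emph{every} common neighbour of $u,v$ carries label at least $1$, so the non-pendant vertices already contribute at least $n-4$. The four pendants then necessarily contribute at least $4$ (the pendants of $u,v$ cannot be $0$ since their sole neighbour has label $1$; for the pendant $x'$ either $f(x')=0$ forces $f(x)=2$, or $f(x')\ge 1$ with $f(x)\ge 1$ already established), giving total weight $\ge n$. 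Your propagation idea is the right engine, but it should be driven to a contradiction in one branch rather than folded into a pair-by-pair count.
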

\begin{proof}
For the sake of contradiction, suppose that there is a PID-function $f$ for $J_\ell$ for any $\ell \geq 1$ with weight less than~$n$.
By combining Lemma~\ref{lem:uv-sum-two} with Lemma~\ref{lem:uv-nonzero}, we know that any such $f$ must label $f(u) = f(v) = 1$.
Consider any vertex $a$ that is a common neighbor of both $u$ and $v$.
If $f(a) = 0$, all neighbors of $a$ must also be labeled~0.
In particular, we have $f(x) = 0$ or $f(y) = 0$. 
Without loss of generality, suppose that $f(x) = 0$, and observe that the pendant vertex $x'$ cannot receive any of the labels 0, 1, or 2 without violating the fact that $f$ is a valid PID-function, a contradiction.
Otherwise, if there is no such $a$ with $f(a) = 0$, the weight of $f$ is at least $n-4$ with only the pendants unlabeled.
Clearly, the two pendants of $u$ and $v$ cannot be labeled~0, but can be labeled~1.
For the pendants $x'$ and $y'$ of $x$ and $y$ there are two choices: either set (i) $f(x') = 0$ and $f(x) = 2$ or set (ii) $f(x') = f(x) = 1$, and similarly the same for $y$ and $y'$.
In both cases $f$ has weight $n$, a contradiction.
\end{proof}
\noindent The previous lemma establishes the main result of this subsection.
\begin{theorem}
\label{thm:planar-pid-n}
There is an infinite family of $n$-vertex connected planar graphs $G$ such that $\pid(G) = n$.
\end{theorem}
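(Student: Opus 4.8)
The plan is to take the explicitly constructed family $\{J_\ell\}_{\ell \geq 1}$ as the witness, so that the proof reduces to checking three things: the family is infinite, every $J_\ell$ is connected and planar, and every $J_\ell$ satisfies $\pid(J_\ell) = n$. The third property is exactly the content of the preceding lemma, so no further work on the domination number itself is needed here; the other two are what I would address.

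First I would dispatch infinitude and connectivity by a vertex count and an easy induction. Since $J_1$ has $10$ vertices and each widening adds precisely two (as recorded when the operation was introduced), we have $|V(J_\ell)| = 2\ell + 8$; the orders are thus pairwise distinct and unbounded, so the graphs $J_\ell$ are pairwise non-isomorphic and the family is genuinely infinite. Connectivity follows by induction: $J_1$ is connected, and a widening only adds edges and attaches new pendants to already-present vertices, which cannot disconnect the graph.

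The substantive step, and the one I expect to be the main obstacle, is confirming that planarity is preserved under widening. I would argue inductively, starting from a planar embedding of $J_{\ell-1}$ in which $u$ and $v$ serve as the two ``poles'' of the central $K_{2,4}$-like gadget, arranged so that the support vertices slated for the next widening lie on faces whose boundary contains both $u$ and $v$ (this is the embedding suggested by Figure~\ref{fig:jewel}). A widening connects the vertices $x'$ and $y'$ to both $u$ and $v$ and hangs a new pendant off each; the key point is that $x'$ and $y'$ sit in faces whose boundaries already contain both $u$ and $v$, so the six new edges can be routed inside those faces without crossings, and the two new pendants placed in the resulting subfaces. This both preserves planarity and reproduces the pole structure needed for the subsequent widening, closing the induction. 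With infinitude, connectivity, planarity, and the value $\pid(J_\ell) = n$ all established, the family $\{J_\ell\}_{\ell \geq 1}$ witnesses the theorem.
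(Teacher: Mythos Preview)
Your proposal is correct and follows exactly the paper's approach: take the family $\{J_\ell\}_{\ell\geq 1}$ and invoke the preceding lemma for $\pid(J_\ell)=n$. The paper's own proof is a single sentence deferring to that lemma, leaving infinitude, connectivity, and planarity implicit in the construction and figure; your added verification of these properties (via the vertex count $2\ell+8$ and the inductive planarity argument through the pole-face structure) is more thorough than what the paper provides but entirely in line with it.
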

\noindent As a side remark, we can also see that for any $\ell \geq 1$, the \emph{treewidth} of $J_\ell$ is three. Thus, unlike for e.g., chromatic number, it is not true that the perfect Italian domination number of a graph could be bounded as a function of treewidth.

\subsection{Regular graphs}
\label{subs:regular}
In this subsection, we shift our focus to regular graphs.
As a main result here, we derive tight upper and lower bounds for the perfect Italian domination number of cubic graphs.

A \emph{strong matching}, also known as an \emph{induced matching}, is a set $M$ of edges of a graph $G$ such that no two edges in $M$ are connected by an edge of $G$.
Viewed differently, an induced matching is an independent set in the square of the line graph $G$.
The \emph{strong matching number}, denoted by $\im(G)$, is the size of a maximum induced matching of $G$.
For the next lemma, the key observation is that if $M$ is a strong matching in a cubic graph $G$, then $V(G) \setminus V(M)$ is a 2-fair dominating set of $G$.
\begin{lemma}
\label{lem:strong-matching-pid}
Every cubic graph $G$ with $n$ vertices has $\pid(G) \leq n - 2 \im(G)$.
\end{lemma}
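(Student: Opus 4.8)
The plan is to exploit the hint given immediately before the statement: if $M$ is a strong matching in a cubic graph $G$, then $D = V(G) \setminus V(M)$ is a $2$-fair dominating set. Once this is established, the lemma follows quickly from the earlier Theorem asserting $\pid(G) \leq \fd2(G)$, since $|D| = n - 2|M|$, and choosing $M$ to be a maximum induced matching gives $|M| = \im(G)$, yielding the weight $n - 2\im(G)$.

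So the real content is verifying that $D = V(G) \setminus V(M)$ is indeed a $2$-fair dominating set, i.e., that every vertex $w \notin D$ has exactly two neighbors in $D$. First I would observe that a vertex $w \notin D$ means $w \in V(M)$, so $w$ is matched by some edge $ab \in M$, say $w = a$ with partner $b$. Since $G$ is cubic, $w$ has exactly three neighbors. One of them is its matching partner $b$, which lies in $V(M)$ and hence is \emph{not} in $D$. The key step is to show that the other two neighbors of $w$ both lie in $D$, i.e., neither of them is covered by $M$. Suppose for contradiction that a neighbor $c \neq b$ of $w$ were in $V(M)$, covered by some edge $e \in M$. Then $e$ and the edge $ab$ would be joined by the edge $wc$ of $G$ (note $e \neq ab$ since $c \neq b$), contradicting the definition of a strong (induced) matching, which forbids any two of its edges from being connected by an edge of $G$. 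Hence both of the remaining neighbors of $w$ lie in $D$, giving exactly two neighbors in $D$, as required.

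I would also need to confirm the domination condition, namely that $D$ is a dominating set: any $w \notin D$ has (in fact exactly two) neighbors in $D$ by the argument above, so it is dominated. With $D$ shown to be $2$-fair dominating, applying the earlier theorem gives $\pid(G) \leq \fd2(G) \leq |D| = n - 2|M|$, and optimizing over induced matchings by taking $M$ maximum completes the proof.

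I do not expect a serious obstacle here; the argument is essentially a single case analysis on the three neighbors of a matched vertex in a cubic graph. The one point requiring care is the edge-counting in the contradiction step: I must make sure the edge $wc$ genuinely witnesses a violation of the induced-matching property, which requires $wc \notin M$ and that it connects two \emph{distinct} matching edges. Since $wc$ has endpoint $w$ covered by $ab$ and endpoint $c$ covered by $e$ with $c \neq b$ forcing $e \neq ab$, the witness is valid. Regularity (every vertex having degree exactly three) is used crucially to pin the count down to exactly two rather than merely at least two, so I would flag that the bound relies on cubicity and would not directly transfer to higher-degree regular graphs.
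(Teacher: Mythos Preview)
Your proposal is correct and follows essentially the same approach as the paper: both rely on the observation that for a strong matching $M$ in a cubic graph, every vertex of $V(M)$ has its matching partner in $V(M)$ and its remaining two neighbors outside $V(M)$. The only cosmetic difference is that you route the conclusion through the earlier bound $\pid(G) \leq \fd2(G)$, whereas the paper directly writes down the PID-function (label $V(M)$ with $0$ and everything else with $1$) and checks it is valid; these are the same argument phrased two ways.
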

\begin{proof}
Let $M$ be any strong matching of $G$.
Construct a vertex-labeling $f : V(G) \to \{0,1,2\}$ such that $f(u) = f(v) = 0$ for every $\{u,v\} \in M$ and label all other vertices~1.
Clearly, $f$ is a PID-function since every vertex $v$ with $f(v) = 0$ has two neighbors labeled~1 and one labeled~0.
The weight of $f$ is $n - 2 |M|$, which is equal to $n - 2 \im(G)$ when $|M| = \im(G)$.
\end{proof}
\noindent The following bound for the strong matching number will be useful for us.
\begin{theorem}[Joos~{et al.}~\cite{Joos2014}]
\label{thm:strong-matching-bound}
A cubic graph $G$ with $m$ edges has $\im(G) \geq m/9$.
\end{theorem}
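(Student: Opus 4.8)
The plan is to reformulate the statement in terms of the \emph{conflict graph} of the edges and then extract the tight constant through a local, discharging-style accounting. Recall that an induced (strong) matching is exactly an independent set in the graph $H$ on vertex set $E(G)$ in which two edges are adjacent precisely when they share an endpoint or are joined by an edge of $G$; this $H$ is the square of the line graph of $G$. Since $G$ is cubic, I would first bound the degrees of $H$: for an edge $e = uv$, at most four edges share an endpoint with $e$ (two at $u$ and two at $v$), and at most eight further edges are joined to $e$ through the four ``outer'' endpoints in $(N(u) \cup N(v)) \setminus \{u,v\}$, so every vertex of $H$ has degree at most $12$. A greedy (or Caro--Wei type) argument then yields the easy bound $\im(G) \ge m/13$.

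The gap between $m/13$ and the claimed $m/9$ is the crux, and it is explained by the fact that the degree bound $12$ is attained only when the local neighborhood of $e$ is ``spread out''. In the densest local configuration, namely $K_{3,3}$, every one of the nine edges conflicts with all eight others, so there the conflict graph is $K_9$ and the greedy bound already gives $9/9 = 1$, which is exactly tight. Thus the worst case for the global constant is not the sparse configuration maximizing the degree but a balance between sparse regions (where many edges are simultaneously available) and dense regions like $K_{3,3}$ (where edges are wasteful but few). To capture this I would fix a \emph{maximum} induced matching $M$ of size $k$, write $V(G) = V(M) \cup W$, and count edges by type: the $k$ matching edges, the edges between $V(M)$ and $W$, and the $m_W$ edges inside $W$. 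Using that $M$ is induced (so $V(M)$ spans only the $k$ matching edges) together with cubic degrees, the first two types contribute exactly $k + 4k = 5k$ edges, so the statement $m \le 9k$ reduces to proving $m_W \le 4k$.

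To bound $m_W$ I would exploit maximality of $M$: every edge inside $W$ must be \emph{blocked}, i.e.\ have an endpoint adjacent to $V(M)$, so all $W$-edges crowd onto the set $W_1 = N(V(M)) \cap W$, which receives the $4k$ edges leaving $V(M)$ and hence satisfies $|W_1| \le 4k$. A naive degree count on $W_1$ only gives $m_W \le 8k$, recovering $m/13$; the improvement to $m_W \le 4k$ requires ruling out the dense local patterns in which the $W$-edges pile up near $W_1$ while $M$ remains un-augmentable. This is the technical heart of the argument: one sets up a discharging scheme in which each edge of $M$ is given a fixed charge and redistributes it to the edges it blocks, and then shows by a finite case analysis of local configurations that equality forces a disjoint union of copies of $K_{3,3}$. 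I expect this case analysis---enumerating and excluding all local structures that would overload the charge while keeping $M$ maximum---to be the main obstacle, and it is precisely the delicate combinatorial work carried out by Joos~{et al.}~\cite{Joos2014}.
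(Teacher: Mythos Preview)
This theorem is stated in the paper as a quoted result of Joos~{et al.}~\cite{Joos2014} and is not proved there; it is invoked purely as a black box inside the proof of Theorem~\ref{thm:pid-cubic-tight}. Consequently there is no ``paper's own proof'' to compare your proposal against.

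As for the proposal itself: the counting identity is sound. For a maximum induced matching $M$ of size $k$ in a cubic graph, $V(M)$ spans exactly the $k$ matching edges, the $2k$ vertices of $V(M)$ send exactly $6k-2k=4k$ edges to $W=V(G)\setminus V(M)$, and hence $m=5k+m_W$, so $\im(G)\ge m/9$ is equivalent to $m_W\le 4k$. Your observation that maximality forces every edge inside $W$ to touch $W_1=N(V(M))\cap W$, together with $|W_1|\le 4k$, is also correct and indeed only recovers $m\le 13k$. But at the decisive step --- upgrading $m_W\le 8k$ to $m_W\le 4k$ via a discharging scheme and a local case analysis --- you explicitly hand the work back to~\cite{Joos2014}. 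So your proposal is an informative reformulation, but in the end it is, like the paper, a citation rather than a proof.
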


\noindent Before proceeding, we mention that Chellali~{et al.}~\cite[Theorem~11]{Chellali2016} proved that $\romtwo(G) \geq 2n/(\Delta + 2)$, where $G$ is a connected $n$-vertex graph with maximum degree $\Delta$.
Combined with Proposition~\ref{prop:pid-bounds}, we obtain the following.
\begin{theorem}
\label{thm:pid-lower-bound-deg}
A connected graph $G$ on $n$ vertices with maximum degree $\Delta$ has $\pid(G) \geq {2n/(\Delta+2)}$.
\end{theorem}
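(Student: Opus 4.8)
The plan is to obtain this bound by simply chaining two facts that are already available, rather than arguing from scratch. First I would recall Proposition~\ref{prop:pid-bounds}, which establishes the chain $\gamma(G) \leq \romtwo(G) \leq \pid(G)$; the half relevant here is that every PID-function is in particular a Roman $\{2\}$-dominating function, so $\romtwo(G) \leq \pid(G)$. Second, I would invoke the cited bound of Chellali~{et al.}~\cite{Chellali2016}, namely $\romtwo(G) \geq 2n/(\Delta+2)$ for any connected $n$-vertex graph of maximum degree $\Delta$. Composing these two inequalities gives
\[
  \pid(G) \geq \romtwo(G) \geq \frac{2n}{\Delta+2},
\]
which is exactly the claim. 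In this route there is essentially no obstacle to overcome: all the analytic content sits in the cited $\romtwo$ bound, and the only new ingredient is the monotonicity already recorded in Proposition~\ref{prop:pid-bounds}.

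For completeness, and in case a self-contained argument is preferred, I could instead prove the bound directly for PID-functions by a degree-counting argument that bypasses $\romtwo$ altogether. Let $f$ be any PID-function of weight $w$, and partition $V$ into the sets $V_0, V_1, V_2$ of vertices labeled $0$, $1$, and $2$, respectively. Since each vertex of $V_0$ receives total weight exactly $2$ from its neighbors, summing this over $V_0$ and re-expressing the double sum by the contributing vertices yields
\[
  2|V_0| = \sum_{v \in V_0} \sum_{u \in N(v)} f(u) = \sum_{u \in V} f(u)\,|N(u) \cap V_0| \leq \Delta \sum_{u \in V} f(u) = \Delta w.
\]
Because $w = |V_1| + 2|V_2| \geq |V_1| + |V_2| = n - |V_0|$, we get $|V_0| \geq n - w$, and substituting into the previous inequality gives $2(n-w) \leq \Delta w$, i.e. $(\Delta+2)w \geq 2n$. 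Rearranging recovers $\pid(G) = w \geq 2n/(\Delta+2)$.

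I expect the chaining proof to be the intended one, since it is immediate and reuses the machinery already set up in the section; the direct computation above serves only as an independent sanity check that the same constant $\Delta+2$ arises naturally from the exact neighbor-weight condition defining PID-functions.
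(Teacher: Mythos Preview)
Your primary argument is exactly the paper's: it obtains the bound by combining Proposition~\ref{prop:pid-bounds} with the Chellali~\emph{et al.}\ lower bound $\romtwo(G) \geq 2n/(\Delta+2)$. The additional direct degree-counting proof you supply is correct as well and a nice self-contained alternative, but it goes beyond what the paper does.
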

\noindent We are now ready to establish the main result of this subsection.
\begin{theorem}
\label{thm:pid-cubic-tight}
Every connected cubic graph $G$ with $n$ vertices has $\tfrac{2}{5}n \leq \pid(G) \leq \tfrac{2}{3}n$. Moreover, these bounds are tight.
\end{theorem}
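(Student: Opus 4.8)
The plan is to obtain both inequalities for free from the machinery already assembled, and then to exhibit extremal graphs. For the lower bound, a connected cubic graph has maximum degree $\Delta = 3$, so Theorem~\ref{thm:pid-lower-bound-deg} immediately gives $\pid(G) \geq 2n/(\Delta + 2) = \tfrac{2}{5}n$, with nothing further to do. For the upper bound, I would chain Lemma~\ref{lem:strong-matching-pid} with Theorem~\ref{thm:strong-matching-bound}: since a cubic graph on $n$ vertices has $m = \tfrac{3}{2}n$ edges, the strong-matching bound yields $\im(G) \geq m/9 = n/6$, and substituting into $\pid(G) \leq n - 2\im(G)$ gives $\pid(G) \leq n - \tfrac{1}{3}n = \tfrac{2}{3}n$.

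It then remains to show that neither bound can be improved. For the upper bound I would use $K_{3,3}$, which is connected and cubic with $n = 6$; Lemma~\ref{lem:k33-pid} gives $\pid(K_{3,3}) = 4 = \tfrac{2}{3}\cdot 6$. This graph is in fact extremal for every step above: its maximum induced matching consists of a single edge, so $\im(K_{3,3}) = 1 = n/6$, and the strong-matching labeling then has weight exactly $n - 2 = 4$. For the lower bound I would use the Petersen graph $P$ (connected, cubic, $n = 10$): Theorem~\ref{thm:pid-lower-bound-deg} forces $\pid(P) \geq 4$, and exhibiting a $2$-fair dominating set of size $4$ produces a PID-function of weight $4$, so $\pid(P) = 4 = \tfrac{2}{5}\cdot 10$.

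To see that the constants are best possible beyond these single orders, the simplest route is Proposition~\ref{prop:disconnected}: $t$ disjoint copies of $K_{3,3}$ give a cubic graph with $\pid = 4t = \tfrac{2}{3}(6t)$, and $t$ disjoint copies of $P$ give $\pid = 4t = \tfrac{2}{5}(10t)$. Should connected families be demanded, I would build them around the structural signature of equality in the lower bound: in a cubic graph, a $2$-fair dominating set $D$ of size exactly $\tfrac{2}{5}n$ forces, by double-counting the edges between $D$ and $V \setminus D$, that $D$ is independent and that $V \setminus D$ induces a perfect matching. This is precisely the pattern realized by the Petersen example, and gadgets carrying this signature can be linked while preserving it.

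The main obstacle is the tightness of the upper bound for a connected infinite family. The only general lower bound on $\pid$ is the degree bound $\tfrac{2}{5}n$, which cannot reach $\tfrac{2}{3}n$; hence certifying $\pid = \tfrac{2}{3}n$ for an extremal member requires an ad hoc lower-bound argument tailored to its structure, exactly as the exact value for $K_{3,3}$ is supplied separately by Lemma~\ref{lem:k33-pid}. The crux is therefore to find connected cubic graphs whose structure forces every PID-function to be heavy, rather than anything about the counting inequalities used to prove the bound itself.
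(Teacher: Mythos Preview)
Your argument is correct and mirrors the paper's proof: the lower bound via Theorem~\ref{thm:pid-lower-bound-deg} with $\Delta=3$ and the upper bound via Lemma~\ref{lem:strong-matching-pid} combined with Theorem~\ref{thm:strong-matching-bound} and $m=\tfrac{3}{2}n$ are exactly what the paper does. The only differences are in the tightness witnesses: for the upper bound the paper uses $K_3 \Box K_2$ (and mentions $K_{3,3}$ as an alternative immediately afterwards, so your choice is fine), while for the lower bound the paper uses $Q_3$ on $8$ vertices with $\pid(Q_3)=4=\lceil 16/5 \rceil$, whereas your Petersen example attains exact equality $\pid(P)=4=\tfrac{2}{5}\cdot 10$, which is arguably cleaner; your extra discussion of infinite families via Proposition~\ref{prop:disconnected} and the double-counting characterization of equality goes beyond what the paper provides.
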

\begin{proof}
The lower bound follows from Theorem~\ref{thm:pid-lower-bound-deg} by having $\Delta = 3$.
The claimed upper bound follows by applying Lemma~\ref{lem:strong-matching-pid} for which we combine the fact that every cubic graph $G$ with $n$ vertices has $\tfrac{3}{2} n$ edges with Theorem~\ref{thm:strong-matching-bound}. That is, we see that
\begin{equation*}
\pid(G) \leq n - 2 \im(G) \leq n - 2(m/9) = n - 2(n/6) = 2n/3.
\end{equation*}

To see that the lower bound is tight, one can consider any connected cubic graph with 8 vertices. For instance, when $G$ is $Q_3$ (i.e., the Cartesian product of three 2-vertex paths $P_2 \Box P_2 \Box P_2$), we have that $\pid(G) = 4 = \lceil 16/5 \rceil$.
To see that the upper bound is tight, one can consider $G$ defined as the Cartesian product of $K_3$ and $K_2$.
Clearly, $G$ does not satisfy the condition of Proposition~\ref{prop:dom}.
Further, $\overline{G}$ is isomorphic to the 6-cycle, which does not admit a perfect dominating set of size three, so by Theorem~\ref{thm:pid-three} it holds that $\pid(G) \geq 4$.
By our upper bound $\pid(G) \leq 4$ as well, so both bounds are tight.
\end{proof}
\noindent Another example to see that $\pid(G) \leq \tfrac{2}{3} n$ is tight is $K_{3,3}$ which by Lemma~\ref{lem:k33-pid} has perfect Italian domination number equal to four.

At this point, it is interesting to contrast the upper bound of the previous theorem for cubic graphs to the result of Theorem~\ref{thm:compl-multi} which implies that there are regular graphs that do not admit a PID-function of weight less than their order~$n$. 
So more precisely, for what values of $k$ do there exist $k$-regular graphs that do not have PID-functions of weight less than $n$? In what follows, we show that there is an infinite family of $k$-regular graphs $G$ for each $k \geq 5$ such that $\pid(G) = n$.

\begin{figure}[t]
    \centering
        \includegraphics[width=0.55\textwidth,keepaspectratio]{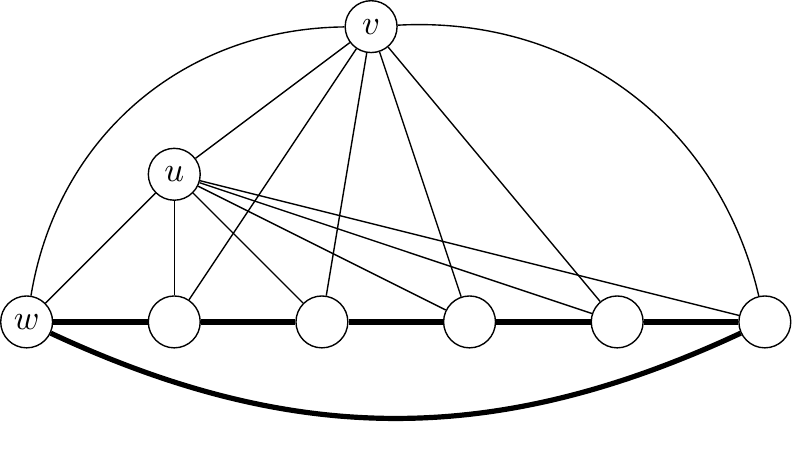} 
    \caption{The graph $\KC(a,r,b,s)$ for $a > 2$, $r=2$, $b=6$ and $s=1$. Other vertices in the partite sets containing $v$ and $u$ are omitted along with edges originating from them. The partite set of size $b$ containing $w$ contains a cycle shown with thick edges.}
    \label{fig:kc}
\end{figure}

We begin by introducing a construction for handling the case when $k \geq 9$.
Fix four non-negative integers $a$, $b$, $r$, and $s$ such that $2 < a < b$.
Let $\KC(a, r, b, s)$ be the graph obtained by starting from $K_{a_1,a_2,\ldots,a_r,b_1,b_2,\ldots,b_s}$ with $a_i = a$ and $b_j = b$ for all $1 \leq i \leq r$ and $1 \leq j \leq s$ and by adding $b$ edges to each $b$-sized partite set to connect its vertices arbitrarily into a cycle.
For an illustration of the definition, see Figure~\ref{fig:kc}.

\begin{lemma}
\label{lem:kcarbs}
Let $a$, $b$, $r$, and $s$ be non-negative integers such that $b > 4$, $r > 1$ and $2 < a < b$.
It holds that $G = \KC(a,r,b,s)$ has $\pid(G) = ar+bs = n$.
\end{lemma}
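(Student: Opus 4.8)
The plan is to prove the nontrivial direction $\pid(G)\ge n$, since the bound $\pid(G)\le n$ is immediate by labeling every vertex $1$. Throughout I will write $A_1,\dots,A_r$ for the (independent) partite sets of size $a$ and $B_1,\dots,B_s$ for the size-$b$ sets, each carrying a spanning cycle, and I will use that $s\ge 1$: for $s=0$ the graph is complete multipartite and the statement reduces to the earlier lemmas on $K_{n_1,\dots,n_k}$. Suppose toward a contradiction that a PID-function $f$ has weight $\Sigma<n$, so that some vertex receives label $0$. My first step is a reduction showing we may assume a $0$-vertex lies in an $a$-set. If the only available $0$-vertex sits in a cycle-set $B_j$, then its two cycle-neighbours lie inside $B_j$, so the PID-condition forces $\sum_{v\notin B_j}f(v)\le 2$. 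Every $a$-set is contained in $V\setminus B_j$ and $a>2$, so the total weight on the $a$-sets is at most $2<a$; hence each $a$-set already contains a $0$-vertex. Thus in every case some $a$-set $A_i$ has a $0$-vertex $u$, and since $A_i$ is independent we get $N(u)=V\setminus A_i$ and $\sum_{v\notin A_i}f(v)=2$. Consequently the positively labeled vertices outside $A_i$ number at most two and occupy at most two partite sets.

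The principal case is $r+s\ge 4$. After excluding $A_i$ together with the at most two sets meeting the outside positives, at least one partite set $V_h\ne A_i$ is entirely $0$. I would then pick a $0$-vertex $w\in V_h$ and observe that every positive vertex lies outside $V_h$, and that $w$'s cycle-neighbours, should $V_h$ be a cycle-set, lie inside $V_h$ and are $0$; hence $w$ sees the entire positive weight and the PID-condition yields $\Sigma=2$. It remains to rule out a weight-$2$ PID-function, i.e.\ to check that $G$ is not of the join form in Proposition~\ref{prop:dom}. No vertex is universal, since degrees equal $n-a$ or $n-b+2$, both strictly less than $n-1$ because $a>2$ and $b>4$; and no two vertices are jointly adjacent to all others, because each partite set has at least three vertices and a cycle-set of length $b>4$ always contains a vertex missing any prescribed pair. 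This contradiction closes the main case.

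The remaining obstacle, and the place where the added cycles genuinely do the work, is the boundary case $r+s=3$, that is $(r,s)=(2,1)$, in which the two outside positives may occupy both remaining partite sets and no spare $0$-set survives. Here the outside positives are a single label $1$ on the second $a$-set, say $p_1\in A_2$, and a single label $1$ on the cycle, say $p_2\in B_1$, with everything else outside $A_1$ equal to $0$. Evaluating the PID-condition at a $0$-vertex of $A_2$ (one exists since $a>2$) forces the weight of $A_1$ to be exactly $1$, so $\Sigma=3$ with precisely three label-$1$ vertices. I then examine a cycle-neighbour $w$ of $p_2$: it is $0$-labeled but adjacent to $p_2$ along the cycle and, across partite sets, to both $p_1$ and the unique positive vertex of $A_1$, so its neighbourhood carries weight at least $3$, contradicting the PID-condition. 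Combining the two cases gives $\Sigma\ge n$, hence $\pid(G)=n$.

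I expect the bookkeeping in the $(2,1)$ case to be the most delicate part of the argument: it is exactly the regime where the complete-multipartite skeleton alone is not enough, so the contradiction has to be extracted from a $0$-vertex adjacent to a positive cycle-vertex. Everywhere else the reasoning parallels the proof for $K_{n_1,\dots,n_k}$ with $k\ge 4$, with the cycle-sets handled by the single extra observation that the two intra-set neighbours of a cycle vertex contribute nothing once that set is forced to be all-$0$.
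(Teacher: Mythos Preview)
Your argument is correct and follows essentially the same route as the paper's proof: reduce to a $0$-vertex in an $a$-set, observe that the outside weight is $2$ and hence concentrated in at most two partite sets, force total weight $2$ whenever a spare all-$0$ set survives (and rule this out via Proposition~\ref{prop:dom}), and then handle the residual $(r,s)=(2,1)$ configuration by exhibiting a $0$-labeled cycle-neighbour of the $B_1$-positive with neighbourhood weight $3$. The only organizational differences are that you dispose of the ``$0$-vertex in a $b$-set'' situation up front as a reduction (the paper defers it to the end), and that your case split is literally on $r+s$ rather than on whether a spare all-$0$ set exists---so when $(r,s)=(2,1)$ with both outside positives in a single set you should say explicitly that the principal-case reasoning still applies; also note that your remark on $s=0$ is not quite right for $r\in\{2,3\}$, though the paper itself tacitly assumes $s\ge 1$.
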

\begin{proof}
We denote the partite sets by $U_i$ when they are of size~$a$ and $V_j$ when they are of size~$b$.
For the purposes of contradiction, we will assume that there exists a PID-function which assigns the label $0$ to some vertex $u$. As a first step, we will also assume that such a vertex is in a set of size $a$, say (without loss of generality) $U_1$. The neighbourhood of $u$ consists of every vertex not in $U_1$, which includes at least $U_2$ and $V_1$. This neighbourhood must also account for labels summing up to~$2$. We consider all the possible cases in the following.\\

\noindent {\bf Case 1:} $u$ has non-$0$ neighbors in a single $a$-sized partite set, say (without loss of generality)~$U_2$. This can either be done with one vertex labeled $2$ or two vertices labeled $1$. At the same time, all the remaining vertices outside $U_1$ (including the vertices of $V_1$), are also forced to have a label of $0$, as they neighbor $u$. In turn, since these are adjacent to the vertices in $U_2$, they force the remaining vertices in $U_1$ to also be labeled $0$. To summarise, every vertex is labeled $0$, except for either one or two vertices in $U_2$ accounting for labels totalling $2$. Since $a > 2$, this implies that there is at least one vertex in $U_2$ labeled $0$. However, all its neighbors are also labeled $0$, creating a contradiction.\\

\noindent {\bf Case 2:} $u$ has non-$0$ neighbors in a single $b$-sized partite set, say (without loss of generality) $V_1$. As before, this can be done with one or two vertices. By the same argument, we have all the vertices outside $V_1$ labeled $0$. Since this is a PID-function, its restriction to $V_1$ must also be a PID-function, since no vertex in $V_1$ has non-$0$ neighbors outside $V_1$. But, given that $b > 4$, we need a PID-function on a cycle of length $5$ or greater, with total weight equal to $2$, which is impossible.\\

\noindent {\bf Case 3:} $u$ has non-$0$ neighbors in two distinct partite sets. Clearly, in this case, there must be two vertices $v_1$ and $v_2$ labeled~$1$. Let these vertices be in the partite sets $X$ and $Y$. If there exists some partite set other than $U_1$, $X$, and $Y$, this forces all remaining vertices in the entire graph to be labeled $0$. Thereby, the PID-function must have a total weight of $2$. But, by the construction of $\KC(a,r,b,s)$, there is no pair of vertices, each of which dominate the entire graph, causing a contradiction. This leaves the subcase where $r = 2$ and $s = 1$, with $v_1 \in V_1$ and $v_2 \in U_2$. Consider the vertices labeled with~0 in $V_1$. Since $b > 4$, at least two of them are non-adjacent to $v_1$. To have a PID-function, these must have some neighbor labeled~$1$. This neighbor must necessarily be in $U_1$. But then, consider the neighbors of $v_1$ within $V_1$. These have a neighborhood of weight~$3$, creating yet another contradiction.\\

From all the above cases we deduce that no vertex in a $U_i$ can be labeled $0$. Then, we must have some vertex in a $V_j$ labeled $0$. But then, it has at least $2a > 4$ neighbors in the $U_i$ partite sets, labeled $1$ or $2$, creating our final contradiction and proving that there is no PID-function with weight less than $n$.
\end{proof}

Following Haynes and Henning~\cite{Haynes2019}, recall that for a given class of graphs $\mathcal{G}$, we are interested in determining the best possible constant $c_\mathcal{G}$ such that $\pid(G) \leq c_\mathcal{G} \cdot n$ for all graphs of order $n$ when $G$ is a member of $\mathcal{G}$.
With this convenient notation at hand, we state the following.
\begin{theorem}
\label{thm:reg-pid-n}
For each $k \geq 5$, there is an infinite family of $k$-regular graphs $\mathcal{G}_k$ such that $c_{\mathcal{G}_k} = 1$.
\end{theorem}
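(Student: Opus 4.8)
The goal is, for each fixed $k \geq 5$, to produce infinitely many $k$-regular graphs $G$ with $\pid(G) = n$, where $n = |V(G)|$. Since labelling every vertex $1$ is always a PID-function, we have $\pid(G) \leq n$ for every $G$, so $c_{\mathcal{G}_k} \leq 1$ automatically; exhibiting such an infinite family therefore forces $c_{\mathcal{G}_k} = 1$. The plan splits into an amplification step and a construction step. For the amplification, note that if $G$ is $k$-regular with $\pid(G) = |V(G)|$, then for every $m \geq 1$ the disjoint union of $m$ copies of $G$ is again $k$-regular and, by Proposition~\ref{prop:disconnected}, has perfect Italian domination number $m\cdot \pid(G) = m|V(G)|$, i.e.\ equal to its order. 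Thus a single $k$-regular witness already yields the required infinite family $\mathcal{G}_k$, and it remains only to build one witness for each $k \geq 5$.

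For $k \geq 8$ I would assemble the witness from the two constructions already available. The graph $\KC(a,r,b,s)$ is regular precisely when $b = a+2$: a vertex of an $a$-set has degree $n-a$ and a vertex of a $b$-set has degree $n-b+2$, and these agree exactly when $b-a = 2$; the common degree is then $k = a(r+s-1)+2s$. Under the hypotheses $a \geq 3$, $r \geq 2$, $s \geq 1$ of Lemma~\ref{lem:kcarbs} (with $b = a+2 > 4$ automatic) we obtain $\pid = n$. A short case check shows that, as $(a,r,s)$ range over these values, $k = a(r+s-1)+2s$ realizes every integer $k \geq 8$ with the sole exceptions $k = 9$ and $k = 15$: fixing $a = 3$ and choosing $s \in \{1,2,3\}$ pins down the residue of $k$ modulo $3$ while $r$ scales the value, and the two gaps below $16$ are filled by taking $a \in \{4,5\}$. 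The two exceptional degrees are each a product of two integers at least $3$ ($9 = 3\cdot 3$ and $15 = 3\cdot 5$), so each is realized by a balanced complete multipartite graph $K_{a,a,\dots,a}$ on $m \geq 4$ parts of size $a \geq 3$, which is $a(m-1)$-regular and has $\pid = n$ by Theorem~\ref{thm:compl-multi}. Together these cover all $k \geq 8$.

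The remaining degrees $k \in \{5,6,7\}$ lie below the reach of both families: $\KC$ has minimum degree $8$, and balanced complete multipartite graphs have degrees that are products of two factors at least $3$ (smallest value $9$). For these I would give explicit connected $k$-regular gadgets and prove $\pid(G) = n$ by the cascading case analysis used for the jewel graphs and in Lemma~\ref{lem:kcarbs}: assume a PID-function of weight less than $n$, fix a vertex labelled $0$, and follow the forced propagation of labels through its neighbourhood; every branch must end either in a local inconsistency (a $0$-labelled vertex whose incident labels cannot sum to exactly $2$) or in a labelling of total weight $2$, the latter excluded by Proposition~\ref{prop:dom} once the gadget is chosen so as not to be a join of the prohibited form.

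I expect these low-degree cases $k \in \{5,6,7\}$ to be the crux. At small degree, regularity leaves very little local slack, so a single $0$-label no longer triggers the overwhelming cascade that the dense complete-multipartite structure of $\KC$ guarantees; the gadgets must instead be engineered so that every admissible propagation still collapses to a contradiction, and this must be verified through the finitely many branches for each of $k = 5,6,7$. By contrast, the covering argument for $k \geq 8$ and the disjoint-union amplification are routine, so the design and verification of the small-degree gadgets is where the real work lies.
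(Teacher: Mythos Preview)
Your strategy matches the paper's: reduce to one witness per $k$ via Proposition~\ref{prop:disconnected}, cover large $k$ with the $\KC$ construction together with balanced complete multipartite graphs, and treat the few small degrees separately. Your degree formula $k = a(r+s-1)+2s$ for regular $\KC(a,r,a+2,s)$ is correct, and your covering argument is in fact slightly sharper than the paper's: the paper splits $k \geq 9$ into four prime/composite cases and relegates $k=8$ to the computer, whereas your parametrization already reaches $k=8$ via $\KC(3,2,5,1)$, leaving only $k\in\{9,15\}$ for the multipartite fallback among $k\geq 8$.

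For the residual degrees $k\in\{5,6,7\}$ the paper does not carry out a hand construction either: it simply cites specific $k$-regular graphs found by computer search (listed in an appendix with their \texttt{graph6} strings) and verifies $\pid = n$ mechanically. So your proposal to design bespoke gadgets and run the cascading case analysis by hand is more ambitious than what the paper actually does; as it stands, that part of your plan is unexecuted, and completing it would require exhibiting the three graphs and checking them, which is exactly the work the paper outsources to the machine.
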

\begin{proof}
To prove our claim, it will suffice by Proposition~\ref{prop:disconnected} to demonstrate the existence of a $k$-regular graph $G$ for which $\pid(G) = |V(G)|$ for each $k \geq 5$.
Indeed, one can take multiple disjoint copies of such $G$ to obtain an infinite family of $k$-regular graphs that do not admit a PID-function of weight less than $|V(G)|$.

Let us first consider the case of $k \geq 9$. We have the following subcases:\\

\noindent {\bf Case 1:} $k$ is a prime number of the form $3t + 2$, with $t \geq 3$. We consider $\KC(t,3,t+2,1)$.\\

\noindent {\bf Case 2:} $k$ is a prime number of the form $3t + 1$, with $t > 3$. We consider $\KC(t-1,2,t+1,2)$.\\

\noindent {\bf Case 3:} $k$ is a composite number of the form $2p$ with $p > 4$ prime. We consider $\KC(p-1,2,p+1,1)$.\\

\noindent {\bf Case 4:} $k$ is a composite number of the form $ab$ where $a, b > 2$. We consider the complete $(a + 1)$-partite graph $K_{b, b, \ldots, b}$.\\

First, we observe that these cases account for every possible $k \geq 9$. Furthermore, each of the given graphs is $k$-regular. Finally, the graphs in the first three cases meet the conditions of Lemma~\ref{lem:kcarbs}, while the graph in the fourth case meets the conditions of Lemma~\ref{lem:larger}; thus they all have $\pid$ equal to their order.

Finally, for $5 \leq k \leq 8$ we turn to a computer search with the help of House of Graphs~\cite{Brinkmann2013}, an online database for ``interesting'' graphs, and the \texttt{genreg} program of Meringer~\cite{Meringer1999}. For full details, we refer the curious reader to the appendix.
\end{proof}
\noindent Despite some effort, we were unable to discover a 4-regular graph $G$ with $\pid(G) = n$. We leave this (or perhaps its negation) as an open question.

\subsection{Split graphs}
In this subsection, we consider \emph{split graphs} defined as graphs whose vertex set can be partitioned into a clique and an independent set.
Split graphs are highly restricted graphs forming a subclass of chordal graphs, which in turn are a subclass of perfect graphs.

For any $\ell \geq 6$, let $S_\ell$ be the split graph of order $\ell + 2$ obtained by starting from a $K_\ell$ and by choosing four distinct arbitrary vertices $\{a,b,c,d\}$ of it and adding two new vertices $x$ and $y$ with the edges $\{ xa, xb, xc \} \cup \{ yd \}$ (see Figure~\ref{fig:split-fam}).
That is, $I = \{ x, y \}$ forms an independent set, while $K = V(S_\ell) \setminus I$ induces a (unique) clique of size $\ell$.

\begin{figure}[t]
    \centering
        \includegraphics[width=0.55\textwidth,keepaspectratio]{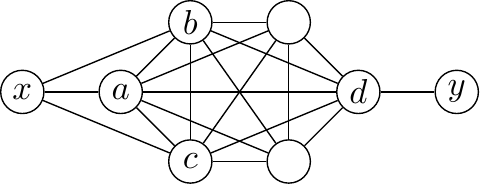} 
    \caption{The parametric split graph $S_\ell$ for $\ell = 6$.}
    \label{fig:split-fam}
\end{figure}

\begin{lemma}
For any $\ell \geq 6$, it holds that $\pid(S_\ell) = \ell + 2 = n$.
\end{lemma}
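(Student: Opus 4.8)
The plan is to show that no PID-function of $S_\ell$ can have weight less than $n = \ell+2$, and then exhibit a trivial upper bound matching this. For the upper bound, note that labeling every vertex~1 is always a valid PID-function of weight~$n$ (there are no vertices of weight~0 to check), so the content is entirely in the lower bound. I would argue by contradiction: suppose $f$ is a PID-function with weight strictly less than $n$, which means at least one vertex receives label~0.

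The key structural fact I would exploit is that $K$ is a clique of size $\ell \geq 6$, so any two vertices of $K$ are adjacent, and every vertex of $K$ sees all of $K \setminus \{\text{itself}\}$. First I would handle the case where some clique vertex $w \in K$ has $f(w) = 0$. Then the labels on $N(w) \supseteq K \setminus \{w\}$ must sum to exactly~2, so the total weight contributed by $K$ is at most~2 (at most~2 coming from $K\setminus\{w\}$, plus $0$ from $w$). But then most clique vertices are labeled~0, and each such zero-labeled clique vertex must itself have its neighborhood summing to~2; since the clique contributes so little weight, I would derive that the only possible positive labels must come from $x$ and $y$, and then check that the pendant-like vertices $x$ (adjacent to $a,b,c$) and $y$ (adjacent only to $d$) cannot repair the deficiency. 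In particular $y$ has the single neighbor $d$, so if $f(y)=0$ we need $f(d)=2$, which already forces a large contribution; and if some clique vertex adjacent only within $K$ (one of the $\ell - 4 \geq 2$ vertices not in $\{a,b,c,d\}$) is labeled~0, its entire neighborhood is inside $K$ and must sum to~2, pinning down the clique weight precisely. Carefully combining these constraints should show the total weight is forced to be at least $\ell+2$, a contradiction.

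The remaining case is that every vertex of $K$ has a positive label, so $K$ already contributes weight at least $\ell$; then I only need to argue that $x$ and $y$ together cannot both be labeled~0 without forcing extra weight. Here $y$'s unique neighbor $d \in K$ has $f(d) \geq 1$, and for $f(y)=0$ we would need $f(d)=2$, while $f(x)=0$ forces $f(a)+f(b)+f(c)=2$; I would show that these adjustments never reduce the total below $n$. Essentially, each of $x,y$ is either labeled positively (adding to the weight) or forces compensating weight of at least the same amount on its neighbors beyond the baseline~$\ell$, so the weight cannot drop below $\ell + 2$.

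The main obstacle I anticipate is the bookkeeping in the first case: once a clique vertex is zero, the clique is ``starved'' of weight, but I must simultaneously verify that \emph{every} zero-labeled vertex (including $x$, $y$, and the many interior clique vertices) has its neighborhood-sum constraint satisfied, and these constraints interact because they all reference overlapping subsets of the same small clique budget. The cleanest route is probably to first prove that $f$ restricted to $K$ forces a contradiction directly from the perfectness condition on interior clique vertices — since an interior vertex of weight~0 demands its neighbors (all of $K$ minus itself) sum to exactly~2, two distinct interior zeros would demand incompatible clique totals unless almost all of $K$ is zero, and then the degree-$\ell$ structure makes a consistent labeling impossible. I would isolate that as the crux and treat $x,y$ as easy corrections afterward, rather than trying to case-split on $x$ and $y$ too early.
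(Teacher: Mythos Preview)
Your plan is sound and, carefully executed, yields a correct proof; it differs from the paper's argument mainly in the order of the case split. The paper first rules out $f(x)=0$ directly (showing it forces $f(d)=0$ with no valid label for $y$), then rules out $f(a)=0$ using the fact that now $f(x)\in\{1,2\}$, and from $f(a)+f(b)+f(c)\ge 3$ deduces that no clique vertex can be $0$; finally it treats $f(y)=0$. You instead split first on whether some clique vertex is $0$, exploit the resulting bound $\sum_{v\in K}f(v)\le 2$, and extract the contradiction from there. Both routes reach the same intermediate conclusion (every vertex of $K$ is positive) and finish the same way; your global clique-weight bound is a pleasant shortcut, while the paper's vertex-by-vertex ladder keeps each step slightly shorter.

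Two small corrections to your sketch. First, the remark that ``two distinct interior zeros would demand incompatible clique totals'' is wrong: if $u,u'$ are both interior zeros, each constraint says exactly that the total clique weight equals~$2$, so they are perfectly compatible. The contradiction in your Case~1 does \emph{not} live inside $K$; it necessarily goes through $x$ or $y$. Concretely, once the clique weight is pinned at $2$: if $f(d)=0$ then the $d$-constraint gives $f(y)=0$ and the $y$-constraint fails; if $f(d)>0$ then at most one of $a,b,c$ is positive, the constraint at a zero among $\{a,b,c\}$ forces $f(x)=0$, and then $f(a)+f(b)+f(c)\le 1\ne 2$. So $x$ and $y$ are not ``easy corrections afterward'' --- they are where the contradiction is. Second, in your Case~2 the situation for $x$ is simpler than ``compensating weight'': with every clique vertex positive one has $f(a)+f(b)+f(c)\ge 3$, so $f(x)=0$ is outright impossible, hence $f(x)\ge 1$, and the count $\ge \ell+2$ follows immediately.
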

\begin{proof}
For the sake of contradiction, suppose that $\pid(S_\ell) < n$ and that this is witnessed by a PID-function $f$.
Because $f$ has weight less than~$n$, there must exist at least one vertex $v$ such that $f(v) = 0$.
Suppose that $f(x) = 0$.
Then, without loss of generality, there are two possibilities: either (i) $f(a) = 2$ and $f(b) = f(c) = 0$ or (ii) $f(a) = 0$ and $f(b) = f(c) = 1$.
In both cases, it follows that all the other vertices of $K$ must be labeled~0 by $f$.
In particular, it holds that $f(d) = 0$, but now there is no label $f$ can assign to $y$.
Thus, $f(x) \neq 0$. 

Without loss of generality, suppose that $f(a) = 0$.
Now, if $f(x) = 2$, it must be that $f(b) = f(c) = 0$.
Again, by the same argument as above, there is no label $f$ can assign to~$y$.
Thus, if $f(a) = 0$ then $f(x) = 1$ must hold.
Now, $f$ must label exactly one vertex of the $\ell - 1$ vertices of $K$ with~1 and the other with~0.
But then there is always at least one vertex in $u \in K \setminus \{ b,c \}$, which is distinct from $d$ as $\ell \geq 6$, such that $\sum_{u' \in N(u)} f(u') = 1$, contradicting the fact that $f$ is a PID-function.

Because none of $a$, $b$, and $c$ can be labeled~0 by $f$, it follows that $f(a) + f(b) + f(c) \geq 3$, and thus $f(u) \neq 0$ for every $u \in K$.
At this point, the only possibility is that $f(y) = 0$. 
It follows that $f(d) = 2$.
As no other vertex can be labeled~0, we can label every remaining vertex~1.
But now the weight of $f$ is $n$, a contradiction.
We conclude that $\pid(S_\ell) = n$, which is what we wanted to prove.
\end{proof}
\noindent The previous lemma establishes the following result.
\begin{theorem}
There is an infinite family of connected split graphs $\mathcal{G}$ such that $c_\mathcal{G} = 1$.
\end{theorem}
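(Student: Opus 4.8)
The plan is to read the theorem off directly from the preceding lemma, so the argument is short and its content is entirely housekeeping. First I would recall what the lemma delivers: for every integer $\ell \geq 6$ it produces a graph $S_\ell$ on $n = \ell + 2$ vertices with $\pid(S_\ell) = n$. I would then check that each $S_\ell$ genuinely belongs to the class in question. By construction the two added vertices $x$ and $y$ form an independent set $I$ (there is no edge $xy$), while the remaining $\ell$ vertices induce a $K_\ell$; hence $S_\ell$ is split. Connectivity is equally immediate, since the clique is connected and both $x$ and $y$ are attached to it. Because $\ell$ ranges over all integers at least $6$, the orders $n = \ell + 2$ are unbounded, so the $S_\ell$ form an infinite family $\mathcal{G}$ of connected split graphs.

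Next I would pin down the constant $c_\mathcal{G}$ itself, using the universal upper bound $\pid(G) \leq n$ valid for every graph $G$: assigning the label $1$ to all $n$ vertices is a PID-function vacuously, as no vertex is labeled $0$, and it has weight $n$. Consequently $c_\mathcal{G} \leq 1$ for any graph class whatsoever, in particular for $\mathcal{G}$. Conversely, since every member $S_\ell$ of $\mathcal{G}$ satisfies $\pid(S_\ell) = n$, any constant $c$ for which $\pid(S_\ell) \leq c \cdot n$ holds across the whole family is forced to satisfy $c \geq 1$. Putting the two bounds together yields $c_\mathcal{G} = 1$, which is exactly the claim.

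I expect no genuine obstacle in this final step: all of the substantive combinatorial work is carried out in the lemma, and the theorem is a packaging statement. The only points demanding (minor) attention are verifying that the $S_\ell$ are connected split graphs and that the family is infinite, both of which follow immediately from the definition of the construction, together with the trivial observation that the all-ones labeling certifies $\pid(G) \leq n$ in general.
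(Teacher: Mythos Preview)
Your proposal is correct and follows exactly the paper's approach: the paper itself offers no proof for this theorem beyond the single sentence ``The previous lemma establishes the following result,'' so your argument is simply a more explicit unpacking of that sentence. The checks you add (that each $S_\ell$ is a connected split graph, that the family is infinite, and that the trivial all-ones labeling gives $\pid(G)\leq n$ so $c_\mathcal{G}=1$) are all immediate from the construction and the definitions.
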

\noindent We can further contrast this result with the fact that threshold graphs, which are precisely the $P_4$-free split graphs, always admit a PID-function of weight at most~2 by Proposition~\ref{prop:pid-threshold}.

\section{Hardness of perfect Italian domination}
\label{sec:hardness}

In this section, we prove that perfect Italian domination is $\NP$-complete, even when restricted to bipartite planar graphs.
In all our hardness proofs, we omit explicitly showing membership to $\NP$ as it is an easy exercise.

To prove the claimed result, we give a polynomial-time reduction from \probPlanarXC in which we are given a finite set $X$ with $|X| = 3q$ and a family $\mathcal{C}$ of 3-element subsets of $X$.
The goal is to decide whether there is a subfamily $\mathcal{C}'$ of $\mathcal{C}$ such that every element of $X$ appears in exactly one element of $\mathcal{C}'$.
Every instance $(X,\mathcal{C})$ is associated with a bipartite \emph{incidence graph}, in which the first set of the bipartition corresponds to elements in $X$ and the second to elements in $\mathcal{C}$.
The edge set is defined such that two vertices are connected precisely when an element of $X$ is contained in an element of $\mathcal{C}$.
In \probPlanarXC, we have the further constraint the incidence graph is both bipartite \emph{and} planar.
This problem was shown to be $\NP$-complete by Dyer and Frieze~\cite{Dyer1986}.
\begin{theorem}[Dyer and Frieze~\cite{Dyer1986}]
\label{thm:planar-xc-npc}
\probPlanarXC is $\NP$-complete.
\end{theorem}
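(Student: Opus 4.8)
The plan is to establish $\NP$-hardness by a reduction from a planar variant of satisfiability---most naturally Planar $3$-SAT, known to be $\NP$-complete by Lichtenstein---whose variable--clause incidence graph comes equipped with a planar embedding. (Membership of \probPlanarXC in $\NP$ is immediate: a claimed subfamily $\mathcal{C}'$ can be verified to be an exact cover in polynomial time, and planarity of the incidence graph is decidable in polynomial time.) Given a planar $3$-SAT formula, I would build an equivalent \probPlanarXC instance $(X,\mathcal{C})$ whose incidence graph inherits the planar layout of the formula, so that the work splits cleanly into designing correct exact-cover gadgets and routing them without crossings.

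The core of the construction is a small collection of planar gadgets built from elements (the vertices of $X$) and $3$-sets (the members of $\mathcal{C}$). First, a \emph{variable gadget} admitting exactly two exact covers of its internal elements, corresponding to the two truth values of the variable. Second, \emph{truth-propagation wires}: chains of overlapping $3$-sets that can be exactly covered in exactly two alternating ways, so that a wire carries a Boolean value from a variable gadget to each clause in which the variable appears. Third, a \emph{clause gadget} on three incoming literal-wires, rigged so that its internal elements can be completed to an exact cover precisely when at least one incoming literal is set to true, with dedicated ``dummy'' elements absorbing the leftover coverage of the unselected branches so that exactness is never violated. Finally, \emph{fan-out gadgets} are needed so that a single variable can feed several clauses in both polarities; each fan-out must duplicate a Boolean value consistently while keeping everything exactly coverable.

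To guarantee planarity, I would lay out the gadgets according to a fixed planar embedding of the variable--clause incidence graph: place each variable gadget and each clause gadget inside a disjoint disk, and route each wire (a path of $3$-sets) along the corresponding edge of the embedding. Since the embedding has no crossings, no two wires need cross, so the incidence graph of $(X,\mathcal{C})$ is planar. Correctness then reduces to the two standard directions: from a satisfying assignment one reads off a variable-gadget state per variable, propagates it along the wires, and completes each clause gadget using a true literal, yielding an exact cover; conversely, any exact cover forces each variable gadget into one of its two states, the wires propagate these states consistently, and the coverability of every clause gadget certifies that every clause has a true literal, giving a satisfying assignment.

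The main obstacle is reconciling the \emph{exactness} of exact cover with planarity at the points of high connectivity, namely the clause gadgets and the variable fan-outs. Unlike ordinary satisfiability gadgets, an exact-cover gadget may never leave an element covered zero or two times, so the ``OR'' behaviour at a clause and the ``copy'' behaviour at a fan-out must be engineered so that every internal element is covered exactly once in each admissible configuration, with dummy elements soaking up precisely the slack of the unused branches. Designing gadgets that simultaneously (i) enforce the intended Boolean semantics, (ii) admit no stray partial cover, and (iii) embed in the plane consistently with the global layout is the technical heart of the argument. An alternative route---starting from general \probXC and planarizing a plane drawing of its incidence graph by inserting a crossover gadget at each crossing---faces exactly the same difficulty, namely building an exact-cover crossover; once such gadgets are in hand, verifying the two directions and bounding the construction size by a polynomial is routine.
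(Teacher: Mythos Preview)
The paper does not prove this theorem at all: it is stated with attribution to Dyer and Frieze~\cite{Dyer1986} and used as a black box in the reduction of Theorem~\ref{thm:pid-npc}. There is therefore no ``paper's own proof'' against which to compare your proposal.

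That said, your outline is a reasonable high-level plan for how such a result is typically established. A reduction from Lichtenstein's Planar $3$-SAT via variable, wire, fan-out and clause gadgets is the standard template, and you correctly identify the genuine difficulty, namely engineering gadgets that are simultaneously planar and \emph{exact} (every element covered exactly once in every admissible configuration). What you have written, however, is a plan rather than a proof: none of the gadgets is actually exhibited, and the entire technical content of the theorem lies precisely in constructing and verifying them. If you were asked to supply a proof here, the appropriate response is simply to cite Dyer and Frieze, as the paper does; if you were asked to reproduce their argument, you would need to present concrete gadgets and check both directions of the reduction explicitly.
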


Before describing our reduction, let us introduce the following gadget.
For any positive integer $\ell \geq 1$, the \emph{fish gadget} $F_\ell$ is constructed by starting from the disjoint union of $2 \ell$ vertices partitioned into two equally-sized sets $T$ and $M$, and by adding two vertices $x$ and $y$ such that $y$ is adjacent to every vertex in $T \cup M$ and $x$ is adjacent to every vertex in $M$. 
Thus, $F_\ell$ has a total of $2 \ell + 2$ vertices, with $\ell$ vertices of degree two and $\ell$ vertices of degree one.
The fish gadget is illustrated in Figure~\ref{fig:fish-gadget}.

\begin{proposition}
\label{prop:fish-large}
For any $\ell \geq 3$, any PID-function $f$ of $F_\ell$ has weight at least $\ell+2$ if $f(x) = 1$. Similarly, if $f(x) = 2$, $f$ has weight at least $\ell+4$.
\end{proposition}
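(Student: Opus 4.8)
The plan is to argue by a short case analysis on the single label $f(y)$, exploiting the fact that all $\ell$ degree-one vertices of $T$ and all $\ell$ degree-two vertices of $M$ see $y$ (and, for $M$, also $x$). First I would record two purely local constraints forced by the PID-condition. Since each $t \in T$ has $N(t) = \{y\}$, such a vertex may be labeled $0$ only if $f(y) = 2$; equivalently, whenever $f(y) \neq 2$ every vertex of $T$ has weight at least $1$, contributing at least $\ell$ to the total weight. Since each $m \in M$ has $N(m) = \{x,y\}$, such a vertex may be labeled $0$ only if $f(x) + f(y) = 2$; equivalently, whenever $f(x) + f(y) \neq 2$ every vertex of $M$ has weight at least $1$, again contributing at least $\ell$. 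These two implications do all the heavy lifting.

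Next I would eliminate the possibility $f(y) = 0$, which I expect to be the only genuinely delicate point. If $f(y) = 0$, then no vertex of $T$ may be labeled $0$, so $\sum_{t \in T} f(t) \geq \ell$; but the perfect condition applied at $y$ itself requires $\sum_{v \in T \cup M} f(v) = 2$, and since $\ell \geq 3$ this is a contradiction. This is the one place where the hypothesis $\ell \geq 3$ is essential; everything after it is bookkeeping. Hence $f(y) \in \{1,2\}$ in both cases of the proposition.

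I would then finish each case by weight-counting. For $f(x) = 1$: if $f(y) = 1$ then $f(x)+f(y)=2$ permits $M$ to be all-zero, but $f(y)\neq 2$ forces all of $T$ up to at least $1$, giving weight at least $\ell + f(x) + f(y) = \ell + 2$; if $f(y) = 2$ then $f(x)+f(y) = 3 \neq 2$ forces all of $M$ up, giving weight at least $\ell + 3 \geq \ell + 2$. For $f(x) = 2$: if $f(y) = 2$ then $T$ may be all-zero, but $f(x)+f(y)=4\neq 2$ forces all of $M$ up, giving weight at least $\ell + 4$; if $f(y) = 1$ then both $T$ (as $f(y)\neq 2$) and $M$ (as $f(x)+f(y)=3\neq 2$) are forced up on every vertex, yielding $2\ell + 3 \geq \ell + 4$. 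So the claimed bounds hold in every subcase.

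As a final sanity check confirming the bounds are the right ones, I would exhibit the extremal labelings that attain them: labeling all of $T$ with $1$, all of $M$ with $0$, and $f(x)=f(y)=1$ realizes weight exactly $\ell + 2$; labeling all of $T$ with $0$, all of $M$ with $1$, and $f(x)=f(y)=2$ realizes weight exactly $\ell + 4$. Beyond isolating $f(y)=0$ via $\ell \geq 3$, I anticipate no real obstacle, as the argument reduces entirely to the two local observations above.
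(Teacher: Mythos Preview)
Your argument is correct. The paper states Proposition~\ref{prop:fish-large} without proof, so there is nothing to compare your approach to; the two local observations you isolate (that $f(t)=0$ forces $f(y)=2$ for $t\in T$, and $f(m)=0$ forces $f(x)+f(y)=2$ for $m\in M$) together with the elimination of $f(y)=0$ via $\ell\ge 3$ are exactly the right ingredients, and your case split on $f(y)\in\{1,2\}$ cleanly yields the claimed bounds.
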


\begin{figure}[t]
    \centering
        \includegraphics[width=0.35\textwidth,keepaspectratio]{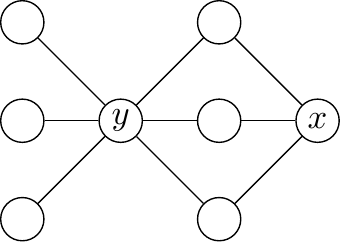} 
    \caption{A fish gadget $F_\ell$ for $\ell = 3$.}
    \label{fig:fish-gadget}
\end{figure}

\noindent We say that a vertex $v$ for which $f(v) = 0$ is \emph{satisfied} if $\sum_{u \in N(v)} f(u) = 2$.
Even more precisely, we say that such a $v$ is \emph{out-satisfied} (with respect to some subgraph $H$ of $G$) if $\sum_{u \in N(v) \setminus V(H)} f(u) = 2$.
Similarly, $v$ is \emph{in-satisfied} if $\sum_{u \in N(v) \wedge u \in V(H)} f(u) = 2$.
For the following statement, the subgraph $H$ is to be understood to be the gadget $F_\ell$ itself. Moreover, $x$ is the only vertex that will be connected to vertices outside of $F_\ell$ in a larger construction to follow.
\begin{proposition}
\label{prop:fish-outsat}
Let $\ell \geq 3$ and $f$ be some PID-function on a graph $G$ containing a copy of $F_\ell$ as a subgraph, with $x$ a cut-vertex. If $f(x) = 0$, then the restriction of $f$ on the copy of $F_{\ell}$ has optimal weight~2 if $x$ is out-satisfied, optimal weight~4 if $x$ is in-satisfied, and optimal weight~3 if $x$ is neither out-satisfied nor in-satisfied. 
\end{proposition}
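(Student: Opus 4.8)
The plan is to analyze the structure of the fish gadget $F_\ell$ under the three scenarios (out-satisfied, in-satisfied, and neither) when $f(x) = 0$, and in each case determine the minimum weight of the restriction of $f$ to $F_\ell$. Recall that $F_\ell$ consists of the $2\ell$ vertices in $T \cup M$, together with $x$ and $y$, where $y$ is adjacent to all of $T \cup M$, and $x$ is adjacent only to $M$. Since $f(x) = 0$, the PID-condition on $x$ concerns $\sum_{u \in N(x)} f(u) = \sum_{u \in M} f(u)$, but because $x$ is a cut-vertex connecting $F_\ell$ to the rest of $G$, this sum can be split into the weight contributed from inside $F_\ell$ (i.e., from $M$) and the weight contributed from outside. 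The three cases correspond exactly to how the total satisfaction weight~2 of $x$ is distributed: out-satisfied means the full weight~2 comes from outside $F_\ell$ (so $\sum_{u \in M} f(u) = 0$), in-satisfied means the full weight~2 comes from $M$ (so $\sum_{u \in M} f(u) = 2$), and neither means the weight is split (so $\sum_{u \in M} f(u) = 1$ and the remaining~1 comes from outside).

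First I would handle the common structural constraints. Since every vertex of $T$ has degree one with its unique neighbor being $y$, any $t \in T$ with $f(t) = 0$ forces $f(y) = 2$, which is costly; more cheaply we can set $f(t) = 1$ for each $t \in T$ whenever $f(y) \neq 2$, but we must minimize the total. The key tension is between the label on $y$ and the labels on $T$: the $\ell$ pendant vertices in $T$ each need their neighborhood (just $y$) to sum to~2 if they are labeled~0, so either $f(y) = 2$ (covering all of $T$ at once) or every $t \in T$ carries a positive label. Comparing costs, since $\ell \geq 3$, setting $f(y) = 2$ and $f(t) = 0$ for all $t \in T$ costs~2 on this part, which is cheaper than labeling all $\ell \geq 3$ pendants. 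However, labeling $y$ with~2 interacts with the $M$-vertices, since $y$ is also adjacent to every vertex of $M$: any $m \in M$ with $f(m) = 0$ would then already receive weight~2 from $y$ alone, forbidding $x$ or any outside neighbor from contributing. I would carefully track these interactions in each of the three cases.

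Next I would work through the three cases concretely. In the \emph{out-satisfied} case, $\sum_{u \in M} f(u) = 0$, so every $m \in M$ is labeled~0 and must be satisfied using only $y$ (since $x$ contributes~0 to its own condition here, but more importantly each $m$ sees $x$ and $y$); forcing $f(y) = 2$ then satisfies all of $M$ and all of $T$ simultaneously, giving total weight exactly~2, which I expect to be optimal. In the \emph{in-satisfied} case, $\sum_{u \in M} f(u) = 2$; I would argue this weight on $M$ plus the necessary cost of handling $T$ (which can no longer be cheaply covered, since if $f(y)=2$ then the labeled $M$-vertices would be over-satisfied or the zero-labeled ones would conflict) forces an additional cost, yielding optimal weight~4. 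In the \emph{neither} case, $\sum_{u \in M} f(u) = 1$ contributes~1, and handling the remainder optimally yields weight~3. I would verify each lower bound by a short case analysis on $f(y) \in \{0,1,2\}$ and exhibit a matching labeling for the upper bound.

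The main obstacle will be the in-satisfied and neither cases, where the weight on $M$ and the treatment of the pendant set $T$ cannot both be made cheap simultaneously, and one must rule out clever labelings that try to reuse $f(y)$ to cover both $T$ and the $M$-condition. The delicate point is that once some $m \in M$ is labeled~0 while $\sum_{u \in M} f(u) \geq 1$, the vertex $m$ sees both $y$ and its nonzero siblings' contributions only indirectly (through $x$ or $y$), so I must pin down exactly which labelings of $y$ are consistent with the prescribed value of $\sum_{u \in M} f(u)$ and with satisfying all pendants in $T$. I expect the cleanest argument to split on the value of $f(y)$ first, then deduce the forced labels on $T$ and $M$, and finally read off the minimum weight in each of the three regimes; the hypothesis $\ell \geq 3$ is what guarantees that labeling $y$ with~2 (cost~2) beats labeling all pendants individually, which is what makes the stated optimal weights come out as~2,~4, and~3 respectively.
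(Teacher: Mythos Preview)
Your overall plan is sound and in fact goes further than the paper's own proof, which simply exhibits the three labelings (set $f(y)=2$ and put weight $0$, $2$, or $1$ on a single vertex of $M$ according to the case, everything else $0$) and then asserts without argument that ``it is easy to see that these labelings are optimal.'' Your proposed case split on $f(y)\in\{0,1,2\}$ is exactly the right way to supply the missing lower bounds: $f(y)=0$ is killed by the pendants in $T$ together with the forced value of $\sum_M f$, and $f(y)=1$ forces every $t\in T$ to carry positive weight, costing at least $\ell\geq 3$ there.

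There is, however, one wrong piece of intuition you should discard before writing things out. In the in-satisfied case you say that setting $f(y)=2$ is problematic because ``the labeled $M$-vertices would be over-satisfied or the zero-labeled ones would conflict.'' Neither happens. A vertex $m\in M$ with $f(m)\neq 0$ has no PID constraint at all, so it cannot be over-satisfied; and a vertex $m\in M$ with $f(m)=0$ has $N(m)=\{x,y\}$, so $\sum_{u\in N(m)} f(u)=f(x)+f(y)=0+2=2$, which is exactly what is required. Thus $f(y)=2$ works perfectly well in all three cases, and the optimal labelings all have $f(y)=2$; the lower bound of $4$ in the in-satisfied case comes simply from $\sum_M f=2$ plus $f(y)=2$, not from any obstruction to using $f(y)=2$. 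Once you correct this, your case analysis on $f(y)$ will go through cleanly and give a more complete argument than the paper's.
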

\begin{proof}
In the first case, set $f(y) = 2$ and label other vertices 0.
In the second case, set $f(y) = 2$, label an arbitrary vertex in $M$ with 2, and label other vertices 0.
In the third case, set $f(y) = 2$, label an arbitrary vertex in $M$ with 1, and label other vertices 0.
It is easy to see that these labelings are optimal.
\end{proof}
Let us call \probPID the problem where we are given a graph $G$ and an integer $k$, and the goal is to decide whether $G$ admits a PID-function of weight at most~$k$.
\begin{theorem}
\label{thm:pid-npc}
\probPID is $\NP$-complete for bipartite planar graphs.
\end{theorem}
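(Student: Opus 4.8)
The plan is to reduce from \probPlanarXC, which is \NP-complete by Theorem~\ref{thm:planar-xc-npc}. Given an instance $(X,\mathcal{C})$ with $|X|=3q$, I would start from its bipartite planar incidence graph, whose two sides consist of an \emph{element vertex} $v_e$ for each $e\in X$ and a \emph{set vertex} $w_C$ for each $C\in\mathcal{C}$, with $v_e w_C$ an edge exactly when $e\in C$. To this I attach two kinds of local gadgets: to each element vertex $v_e$ I glue a fish gadget $F_\ell$ by identifying its cut-vertex $x$ with $v_e$, choosing $\ell$ large (larger than the intended budget); to each set vertex $w_C$ I attach two private pendant vertices. Since each gadget is planar and is glued at a single vertex, and since both the fish gadget and a pendant two-color compatibly with the incidence bipartition, the resulting graph $G$ is bipartite and planar. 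I then set the budget to $k=6q+2|\mathcal{C}|$, and observe that the vertex set of $G$ partitions into one fish gadget per element and one triple $\{w_C,p_C^1,p_C^2\}$ per set, so the total weight is the sum of the per-gadget weights.

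The intended correspondence is that putting $C$ into the subfamily $\mathcal{C}'$ means labeling $w_C$ with $2$, which contributes weight $2$ to each of its three element-neighbors. The large fish gadget forces $f(v_e)=0$ for every element vertex, because by Proposition~\ref{prop:fish-large} any label $f(x)\ge 1$ costs at least $\ell+2$, exceeding $k$. With $f(v_e)=0$, PID-validity forces the labels on $N(v_e)$ to sum to exactly $2$, and by Proposition~\ref{prop:fish-outsat} the fish part then contributes weight at least $2$, with equality forcing $v_e$ to be \emph{out-satisfied}, that is, to receive weight exactly $2$ across the incidence edges. The two pendants pin each set vertex to clean $\{0,2\}$-behavior within budget: a pendant labeled $0$ must have its unique neighbor $w_C$ labeled $2$, so if $f(w_C)=0$ both pendants take $1$ (cost $2$), if $f(w_C)=2$ both pendants take $0$ (cost $2$), while $f(w_C)=1$ forces both pendants to be positive and hence cost at least $3$. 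Thus each element gadget costs at least $2$ and each set gadget costs at least $2$, so every PID-function of $G$ has weight at least $6q+2|\mathcal{C}|=k$; the decision ``weight at most $k$'' is therefore equivalent to attaining this global minimum.

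For correctness I argue both directions. Given an exact cover $\mathcal{C}'$, label $w_C=2$ with its two pendants $0$ for $C\in\mathcal{C}'$, label $w_C=0$ with both pendants $1$ otherwise, and put $f(v_e)=0$ with each element fish in its weight-$2$ out-satisfied configuration from Proposition~\ref{prop:fish-outsat}; every element is covered exactly once, so each $v_e$ receives external weight exactly $2$ and is satisfied, giving total weight exactly $k$. Conversely, any PID-function of weight at most $k$ must meet every lower bound above with equality: all element vertices are labeled $0$ and out-satisfied, and every set gadget costs exactly $2$ with $f(w_C)\in\{0,2\}$. Out-satisfaction means each $v_e$ receives external weight exactly $2$; since set labels lie in $\{0,2\}$ and coverage exceeding $2$ would leave the weight-$0$ vertex $v_e$ with neighbor-sum greater than $2$ and hence violate PID-validity, each element is covered by exactly one set labeled $2$. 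Therefore $\mathcal{C}'=\{C : f(w_C)=2\}$ is an exact cover.

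The main obstacle I anticipate is the set-vertex gadget, which must do two jobs at once: (i) force the binary behavior $f(w_C)\in\{0,2\}$ within the tight budget, so that no ``fractional'' labeling $f(w_C)=1$ can let two sets jointly supply weight $2$ to a shared element, and (ii) cost the same value $2$ whether or not the set is chosen, so the budget counts coverage alone and not the number of chosen sets. Pinning down that weight exactly $k$ forces all of these inequalities to be tight—together with the bookkeeping that the glued gadgets preserve both planarity and the incidence bipartition—is where the care lies; the fish-gadget estimates of Propositions~\ref{prop:fish-large} and~\ref{prop:fish-outsat} carry the element side of the argument.
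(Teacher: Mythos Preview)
Your proposal is correct and follows essentially the same construction and argument as the paper: reduce from \probPlanarXC, glue a large fish gadget at each element vertex and two pendants at each set vertex, set $k=6q+2|\mathcal{C}|$, and use Propositions~\ref{prop:fish-large} and~\ref{prop:fish-outsat} together with the pendant analysis to force $f(v_e)=0$, out-satisfaction at every element, and $f(w_C)\in\{0,2\}$ at tightness. The only cosmetic difference is that the paper fixes $\ell=k$ while you take any $\ell$ exceeding the budget; both choices make Proposition~\ref{prop:fish-large} fire and keep the reduction polynomial.
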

\begin{proof}
Let $(X,\mathcal{C})$ be an instance of \probPlanarXC, such that $X = \{1,2,\ldots,n\}$, $\mathcal{C} = \{ C_1, C_2, \ldots, C_t \}$, and $|X| = 3q$.
We proceed by describing a polynomial-time reduction to \probPID as follows.

Let $H$ be the bipartite incidence graph of $(X,\mathcal{C})$, which we can also safely assume to be planar by Theorem~\ref{thm:planar-xc-npc}.
So more precisely, $V(H) = \{x_1,x_2,\ldots,x_n\} \cup \{ c_1, c_2, \ldots, c_t \}$ with $x_i$ and $c_j$ adjacent precisely when $i$ is a member of $C_j \in \mathcal{C}$.
Let $k = 6q+2t$.
To obtain $G$ from $H$, identify $x_i$ for $i \in [n]$ with a fish gadget $F_k$ (at its vertex $x$) and attach to $c_j$ for $j \in [t]$ two pendants $c'_j$ and $c''_j$.
We tacitly name $y_i$ the vertex $y$ of a fish gadget corresponding to the vertex $x_i$.
Clearly, because the fish gadget is both bipartite and planar, $G$ is bipartite and planar as well.
We claim that $(X,\mathcal{C})$ has an exact cover if and only if $G$ admits a PID-function of weight at most~$k$.

Let $\mathcal{C'}$ be an exact cover of $(X,\mathcal{C})$.
We construct a vertex-labeling $f$ of $G$ such that $f(c_j) = 2$ for $C_j \in \mathcal{C'}$; all other vertices $c_j$ not in $\mathcal{C'}$ are labeled~0.
Here, if $f(c_j) = 2$, we set $f(c'_j) = f(c''_j) = 0$ and if $f(c_j) = 0$, then $f(c'_j) = f(c''_j) = 1$.
At this point, the labels used by $f$ have weight $2t$.
For each $i \in [n]$, we label $f(x_i) = 0$, $f(y_i) = 2$, and all the remaining vertices~0.
As $|X| = 3q$ and $f(y_i) = 2$, the weight of $f$ is exactly $2 \cdot 3q + 2t = 6q+2t = k$.
Now, since $\mathcal{C'}$ is an exact cover, every $x_i$ is out-satisfied by some $c_j$ corresponding to a $C_j \in \mathcal{C'}$.
For each $j \in [t]$, if $f(c_j) = 0$, then $c_j$ is satisfied by $f(c'_j) + f(c''_j) = 2$.
It follows that $f$ is a PID-function.

Conversely, suppose that $f$ is a PID-function of weight $k$.
It holds for every $i \in [n]$ that $f(x_i) = 0$ for otherwise $f$ would have weight at least $k+2 > k$ by Proposition~\ref{prop:fish-large}.
Further, as $S = \bigcup_{j \in [t]} \{c_j,c'_j,c''_j\}$ requires labels of weight at least $2t$, it follows by Proposition~\ref{prop:fish-outsat} that each $x_i$ must be out-satisfied for otherwise $f$ would have weight at least $2t+6q-2+3 = k + 1 > k$.
It follows that $f$ has allocated labels of weight $k - 6q = 2t$ to~$S$.
Further, this is only possible if $f(c_j) \neq 1$ for $j \in [t]$ for otherwise $f$ would have weight at least $6q + 2t - 2 + 3 = k + 1 > k$, to properly label $\{c_j,c'_j,c''_j\}$.
Therefore, since $f$ is a PID-function, every $x_i$ is out-satisfied by exactly one $c_j$ for which $f(c_j) = 2$.
Consequently, $\mathcal{C'} = \{ C_j \mid f(c_j) = 2 \}$ is an exact cover of $(X,\mathcal{C})$.
\end{proof}

It is worth mentioning that the earlier result of Chellali~{et al.}~\cite[Theorem~18]{Chellali2016} regarding the hardness of computing $\romtwo(G)$ also works for bipartite planar graphs.
Let us call \probRomanTwoD the problem of deciding whether given a graph $G$ and an integer $k$, it is true that $\romtwo(G) \leq k$.
\begin{theorem}
\probRomanTwoD is $\NP$-complete for bipartite planar graphs.
\end{theorem}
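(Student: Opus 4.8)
The plan is to give a polynomial-time reduction from \probPlanarMonotoneThreeSat, which is \NP-complete, to \probRomanTwoD on bipartite planar graphs; this amounts to exhibiting a planar, monotone realisation of the bipartite reduction of Chellali~{et al.}~\cite{Chellali2016}. The relaxed requirement of Roman $\{2\}$-domination---that every $0$-labelled vertex see neighbour-weight \emph{at least} $2$---matches the semantics of a clause being satisfied by \emph{at least one} true literal, so a 3-SAT variant is a more natural source than the \probPlanarXC used in Theorem~\ref{thm:pid-npc}. Indeed, the fish-gadget construction cannot be reused verbatim: under the ``$\geq 2$'' rule an element vertex becomes dominated as soon as \emph{some} incident set is chosen (not exactly one), so that reduction would only certify a set cover rather than an exact cover, and the budget would no longer constrain which clauses are satisfied.

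First I would build a variable gadget for each variable $v$, equipped with two designated literal ports $p_v$ and $\overline{p}_v$, engineered so that every minimum Roman $\{2\}$-dominating function is forced into one of exactly two symmetric optimal states of equal cost, read as $v=\text{true}$ (making $p_v$ carry enough weight to help dominate adjacent clause vertices) or $v=\text{false}$ (activating $\overline{p}_v$ instead). Next, for each clause I would attach a single clause vertex $c$, labelled $0$ in the intended solution and adjacent to the three ports of its literals; the constraint at $c$ then holds precisely when at least one incident port is active, so $c$ is Roman $\{2\}$-dominated exactly when the clause is satisfied. Finally I would set the budget $k$ equal to the exact sum of the forced per-gadget costs, so that a Roman $\{2\}$-dominating function of weight at most $k$ exists if and only if every variable gadget occupies a consistent true/false state and every clause vertex is dominated, i.e.\ if and only if the formula is satisfiable.

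The two structural ingredients are bipartiteness and planarity. Bipartiteness I would secure by making every gadget internally bipartite and checking that the $2$-colouring extends across the port identifications, placing all ports in one colour class and all clause vertices in the other. For planarity I would exploit the monotone planar layout guaranteed by \probPlanarMonotoneThreeSat: arrange the variable gadgets along a horizontal line, route the all-positive clauses into the half-plane above (attaching their clause vertices to the $p_v$ ports) and the all-negative clauses into the half-plane below (attaching to the $\overline{p}_v$ ports). Since each gadget is local and each clause vertex connects only to three literal ports on its own side, the whole graph embeds without crossings.

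The main obstacle I anticipate is gadget calibration: designing a bipartite variable gadget whose minimum Roman $\{2\}$-domination weight is attained at precisely the two consistent states, ruling out ``half-true'' configurations that exploit the slack in the $\geq 2$ inequality, and fixing the weights so that the additive budget cleanly separates satisfiable from unsatisfiable instances. The delicate part of the correctness proof will be the converse direction---verifying that no cheaper cheating labelling (for instance one leaving a port partially active so as to simultaneously help-dominate several clause vertices on both sides) can beat the intended cost; the forward direction, in which a satisfying assignment yields a function of weight exactly $k$, is routine, and membership in $\NP$ is immediate.
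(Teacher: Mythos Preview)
Your proposal rests on a misreading of both the paper and the cited reduction, and this leads you down a much harder road than necessary. The paper's own proof is two lines: Chellali~{et al.}'s reduction for \probRomanTwoD is from \probXC, and the gadget they attach to each set-vertex is a $C_6$ with a chord followed by a pendant path---this is bipartite and planar. Hence, if one simply starts from an instance of \probPlanarXC (Theorem~\ref{thm:planar-xc-npc}), the output graph is automatically bipartite and planar, and the original correctness argument goes through unchanged. No new gadget design is required at all.

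Your justification for abandoning the \probXC route is mistaken on two counts. First, the fish gadget $F_\ell$ is the paper's device for \probPID (Theorem~\ref{thm:pid-npc}), not Chellali~{et al.}'s gadget for \probRomanTwoD; these are different constructions for different problems, and you have conflated them. Second, and more conceptually, the worry that under the ``$\geq 2$'' rule one only certifies a set cover rather than an exact cover is unfounded: in Chellali~{et al.}'s reduction the exactness is enforced by the \emph{budget}, not by the neighbourhood constraint. The budget permits choosing at most $q$ sets (where $|X|=3q$), and any $q$ triples that jointly cover all $3q$ elements are necessarily disjoint. This is the standard trick, and it works perfectly well under the relaxed inequality.

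Beyond that, your own proposal is a programme rather than a proof: you specify the properties the variable gadget must have but never construct one, and you explicitly flag the gadget calibration as the main unresolved obstacle. Even as a plan it has a loose end: a clause vertex labelled~$0$ needs neighbour-weight at least $2$, so ``at least one active port'' only suffices if an active port carries weight~$2$ by itself, which in turn constrains the internal gadget economy in ways you have not worked out. The approach could likely be completed, but it is substantially more work than the one-line observation the paper actually uses.
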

\begin{proof}
Chellali~{et al.}~\cite[Theorem~18]{Chellali2016} prove $\NP$-completeness of \probRomanTwoD for bipartite graphs by a polynomial-time reduction from an arbitrary instance $(X,\mathcal{C})$ of \probXC.
In short, their reduction begins from the bipartite incidence graph $H$ of $(X,\mathcal{C})$, but replaces every vertex corresponding to a $C \in \mathcal{C}$ with a $C_6$ with a chord followed by a 2-vertex path.
Because this gadget is both bipartite and planar, we ensure that the instance $G$ of \probRomanTwoD is both bipartite and planar by assuming that $H$ is planar.
By Theorem~\ref{thm:planar-xc-npc}, we can do this safely, so the result follows.
\end{proof}

\section{Open problems}
\label{sec:problems}
In this section, we conclude by highlighting some open problems arising from our work.

We begin with the following complexity-theoretic statement.
\begin{conjecture}
For every $k \geq 3$, \probPID is $\NP$-complete for the class of $k$-regular graphs. 
\end{conjecture}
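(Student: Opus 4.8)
The plan is to prove the conjecture by \emph{regularizing} the hardness reduction behind Theorem~\ref{thm:pid-npc}. Fix the target degree $k \geq 3$. Since the fish-gadget construction used for Theorem~\ref{thm:pid-npc} produces graphs whose maximum degree grows with the instance (the vertex $y$ of each $F_\ell$ has degree $2\ell$), the first step is to replace it by a reduction that already outputs graphs of \emph{bounded} maximum degree. Concretely, I would start from a bounded-occurrence variant of \probXC or of \probThreeSAT, in which each element (resp.\ variable) lies in a constant number of sets (resp.\ clauses), so that the incidence graph has maximum degree at most some fixed $d_0 \leq k$. One then designs small element- and clause-gadgets whose internal degrees never exceed $k$; because the PID condition is local, verifying correctness of such a bounded-degree reduction is a routine adaptation of the proof of Theorem~\ref{thm:pid-npc}, reusing the out-/in-satisfaction bookkeeping of Propositions~\ref{prop:fish-large} and~\ref{prop:fish-outsat}.

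The second, and main, step is to lift a graph $H$ of maximum degree at most $k$ to a $k$-regular graph $H_k$ without changing the answer. For each vertex $w$ with $\deg(w) = k - j$ I would attach a \emph{filler gadget} $D_j$ through $j$ fresh edges, where $D_j$ is a fixed finite graph all of whose non-port vertices already have degree $k$ and whose $j$ ports receive the edges to $w$. The gadget must be \emph{PID-transparent}: in any PID-function of $H_k$ its restriction to $D_j$ should carry a constant weight $w_j$, and each port $p$ adjacent to $w$ should be forced to the label $0$ while being satisfied using only its own neighbors, so that $D_j$ contributes nothing to the neighborhood sum of $w$. This is exactly the self-contained, weight-predictable behavior isolated by Proposition~\ref{prop:fish-outsat} for the cut-vertex $x$, and it is the same forcing philosophy that makes the graphs $\KC(a,r,b,s)$ of Lemma~\ref{lem:kcarbs} and the complete multipartite graphs of Lemma~\ref{lem:larger} have $\pid$ equal to their order. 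The weight threshold for $H_k$ is then the original bound plus $\sum_w w_{j(w)}$, and correctness follows in both directions: a cheap PID-function of $H$ extends by the canonical labeling of every $D_j$, while any cheap PID-function of $H_k$ restricts to one of $H$ once transparency of each filler has been checked. Since a single gadget family is unlikely to realize every degree simultaneously, I expect to split into cases according to the residue or primality type of $k$, mirroring the case analysis of Theorem~\ref{thm:reg-pid-n}, and to dispatch the small degrees $k \in \{3,4\}$ by ad hoc gadgets, possibly assisted by computer search as was done there; the infinite families then arise by taking disjoint copies via Proposition~\ref{prop:disconnected}.

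The hard part will be constructing the filler $D_j$ so that it is simultaneously degree-completing, weight-neutral, and non-interfering. The obstruction is precisely the rigidity of the PID condition: a port $p$ adjacent to $w$ has $f(w)$ counted inside its own neighborhood sum, so forcing $p$ to be $0$ and satisfied requires the remaining (gadget-internal) neighbors of $p$ to sum to $2 - f(w)$, a quantity depending on the unknown label of $w$. A viable gadget must therefore admit \emph{three} canonical internal labelings of the same total weight $w_j$, one for each value $f(w) \in \{0,1,2\}$, each keeping every gadget vertex satisfied. Engineering such a flexible yet weight-invariant gadget for all deficiencies $1 \leq j \leq k-1$ and all $k \geq 3$ is the central difficulty, and is the reason the statement is posed as a conjecture rather than proved outright.
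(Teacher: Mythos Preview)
The paper does not prove this statement: it appears in Section~\ref{sec:problems} as an open \emph{conjecture}, with no accompanying argument. There is therefore no ``paper's own proof'' to compare against.

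Your submission is a research plan, not a proof. You outline a two-step strategy (a bounded-degree version of the reduction from Theorem~\ref{thm:pid-npc}, followed by degree-completion via filler gadgets), but you yourself isolate the crux and leave it open: a port $p$ of a filler adjacent to $w$ has $f(w)$ in its neighborhood sum, so forcing $f(p)=0$ requires the gadget to supply $2-f(w)$ internally, and hence to admit three distinct internal labelings of identical total weight, one per value of $f(w)$. You do not construct any such gadget for any $k$ or any deficiency $j$, nor do you show that one exists; you explicitly call this ``the central difficulty'' and ``the reason the statement is posed as a conjecture.'' That is an honest assessment, but it means no claim has been established. The surrounding scaffolding (bounded-occurrence source problem, additive bookkeeping of gadget weights, case split on $k$ in the style of Theorem~\ref{thm:reg-pid-n}) is reasonable, yet none of it goes through without the missing gadget. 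As written, this is a proposal for future work, matching the status the paper already assigns to the statement.
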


In the light of our construction in the proof of Theorem~\ref{thm:planar-pid-n}, it might be interesting to consider other planar graphs $G$ with $\pid(G) = n$.
We verified by a computer search the smallest planar graph $G$ with $\pid(G) = n$ to have $n=7$ vertices, and there are no other such planar graphs on 7 vertices.
Thus, one might ask the following.
\begin{problem}
Can we characterize the connected $n$-vertex planar graphs $G$ such that $\pid(G) = n$, or at least find some conditions for this to hold?
\end{problem}
\noindent Also, after Theorem~\ref{thm:planar-pid-n}, it is natural to raise the question of Haynes and Henning~\cite{Haynes2019} for the class of \emph{bipartite} planar graphs.
At the same time, given our $\NP$-completeness result Theorem~\ref{thm:pid-npc}, one should not expect a polynomial-time characterization for this class.
\begin{problem}
Determine the best possible constant $c_\mathcal{G}$ such that $\pid(G) \leq c_\mathcal{G} \times n$ for all $n$-vertex graphs $G$ belonging to the class of connected bipartite planar graphs $\mathcal{G}$.
\end{problem}
In light of Theorem~\ref{thm:reg-pid-n}, we find the same question interesting for 4-regular graphs.
Despite some effort, we were unable to find a 4-regular graph $G$ on $n$ vertices for which $\pid(G) = n$ and thus conjecture that an upper bound better than~$n$ does exist.
Further, if we insist on the family of graphs in Theorem~\ref{thm:reg-pid-n} to be connected, does the statement still hold?
In general, we find the further study of perfect Italian domination interesting for regular graphs.

\bibliographystyle{abbrv}
\bibliography{bibliography}

\section*{Appendix}
As mentioned in the proof of Theorem~\ref{thm:reg-pid-n}, we describe here the $k$-regular graphs $G$ on $n$ vertices for which $\pid(G) = n$ for $5 \leq k \leq 8$. 
The mentioned graphs are listed in Table~\ref{tbl:reg-pid-n}.
Here, the first column stands for the internal identifier at House of Graphs (HoG) at the time of writing. 
The second column is the degree of the graph, with $n$ and $m$ the order and size of the graph, respectively.
The last column describes the graph in the well-known graph6 format.

We make no attempt at optimizing the order or size of the graphs.
However, for $k=5$, it can be noted that there is no example of smaller order as the only 5-regular graph with fewer than 8 vertices is $K_6$ which admits a PID-function of weight two.

\begin{table*}[t]
\caption{Examples of $k$-regular graphs with $n$ vertices and $m$ edges for which $\pid(G) = n$.}
\label{tbl:reg-pid-n}
\centering
\def\arraystretch{1}
\begin{tabular}{lllll}
\toprule 
\textbf{HoG ID} & $k$ & $n$ & $m$ & \textbf{graph6} \\ 
\midrule
34402 & 5 & 8 & 20 & \verb|G}qzp{| \\
\rowcol 33136 & 6 & 12 & 36 & \verb|KvyCJlmF_{kN| \\
28355 & 7 & 24 & 84 & \verb|WsaCC???Wg_qK@WBGQOVS@woL`aES@pHC[`a[CFBRW?Nq??| \\
\rowcol 32790 & 8 & 12 & 48 & \verb|K~~LnNwFy^e~| \\
\bottomrule 
\end{tabular}
\end{table*}

\end{document}